\theoremstyle{definition}
\newtheorem*{theorem*}{Theorem}
\newtheorem{prop}{Proposition}
\newtheorem*{note*}{Note}
\newtheorem{ex}{Example}
\newcommand{\argmin}{\mathop{\arg\min}}
\newcommand{\diff}{\mathrm{d}}
\newcommand{\toprob}{\mathop{\to}^\text{in prob.}}
\newcommand{\ig}{(\text{IG})}
\newcommand{\gompertz}{(\text{Gom})}
\newcommand{\gm}{(\text{GM})}
\newcommand{\revisebegin}{}
\newcommand{\reviseend}{}
\title{Minimizing robust density power-based divergences for general parametric density models}
\author[1,2]{Akifumi Okuno}
\affil[1]{The Institute of Statistical Mathematics}
\affil[2]{RIKEN Center for Advanced Intelligence Project}
\date{\empty}
\begin{document}

\maketitle

\begin{abstract}
Density power divergence (DPD) is designed to robustly estimate the underlying distribution of observations, in the presence of outliers. 
However, DPD involves an integral of the power of the parametric density models to be estimated; the explicit form of the integral term can be derived only for specific densities, such as normal and exponential densities. 
While we may perform a numerical integration for each iteration of the optimization algorithms, the computational complexity has hindered the practical application of DPD-based estimation to more general parametric densities. 
To address the issue, this study introduces a stochastic approach to minimize DPD for general parametric density models. The proposed approach can also be employed to minimize other density power-based $\gamma$-divergences, by leveraging unnormalized models. 
\revisebegin 
We provide \verb|R| package for implementation of the proposed approach in \url{https://github.com/oknakfm/sgdpd}.
\reviseend
\end{abstract}

\textbf{Keywords:} robust density power divergence, general parametric densities, stochastic optimization

\section{Introduction}

As the presence of outliers within observations may adversely affect the statistical inference, robust statistics has been developed for several decades~\citep{huber1981robust,hampel1986robust,maronna2006robust}. 
Amongst many possible directions, the divergence-based approach, which estimates some parameters in probabilistic models by minimizing the divergence to underlying distributions, has drawn considerable attention owing to its compatibility with the probabilistically formulated problems. 
In particular, density power divergence (DPD), also known as $\beta$-divergence \citep{basu1998robust}, extends the Kullback-Leibler divergence to enhance robustness against outliers. DPD has gained recognition as one of the most widely used divergences across disciplines. 
DPD finds applications in various fields, including blind source separation~\citep{minami2002robust}, matrix factorization~\citep{tan2012automatic}, signal processing~\citep{basseville2013divergence}, Bayesian inference~\citep{ghosh2016robust}, variational inference~\citep{futami2018variational}, and more, contributing to the enhancement of robustness in these applications.

DPD comprises an integral term of the power of the parametric density models to be estimated. 
Unfortunately, however, an explicit form of this integral term can be obtained only for specific density functions including normal density~\citep{basu1998robust}, exponential density~\citep{jones2001comparison}, generalized Pareto density~\citep{juarez2004robust}, Weibull density~\citep{basu2016generalized}, and 
generalized exponential density~\citep{hazra2022minimum}; most of the existing literature considers only the normal density function. The computational difficulty of this integral term has hindered the application of DPD-based estimation to general parametric density models, over a quarter of a century since the proposal of DPD. 
To overcome this computational limitation, for instance, \citet{pmlr-v89-okuno19b} considers matching mean functions instead of matching probability densities, \citet{fujisawa2006robust} minimizes an upper-bound of DPD for normal mixture, and \citet{kawashima2019robust} and \citet{nandy2022robust} compute finite approximations of the intractable term for Poisson and skew-normal distributions, respectively. 
While these challenging attempts may offer individual solutions for specific parametric density estimations, they cannot be extended to the countless other types of parametric density models, including inverse-normal and Gompertz densities (refer to, for instance, \citet{krishnamoorthy2006handbook} for a comprehensive list of statistical distributions). 
Such a stringent limitation on the choice of parametric density models can lead to unwanted model misspecification, even in cases where the form of the underlying density function is already roughly specified. 
Hence, there is a considerable demand for a general approach to minimize DPD for a wide range of parametric density models.


One straightforward method to deal with the integral term is to perform numerical integration. However, in the context of parameter estimation problems, employing gradient descent (GD) necessitates the computation of numerical integration at each iteration, resulting in a non-negligible computational complexity. 
Furthermore, numerical integration may introduce a non-negligible approximation error. Therefore, in this study, we introduce a simple stochastic optimization algorithm~\citep{robbins1951stochastic,ghadimi2013stochastic}, which addresses both of these issues simultaneously. An illustration of this approach, including the estimation of Gompertz and normal mixture density models, can be found in Figure~\ref{fig:illustration}. Notably, the proposed optimization approach can also be viewed as a robust divergence modification of the contrastive divergence learning~\citep{hinton2002training,perpinan2005contrastive}.

This study also explores the minimization of another well-known density power-based $\gamma$-divergence~\citep{fujisawa2008robust} using the same approach, leveraging the unnormalized models~\citep{kanamori2015robust}.

\begin{figure}[!ht]
\centering 
\begin{minipage}{0.49\textwidth}
\centering
\includegraphics[width=\textwidth]{./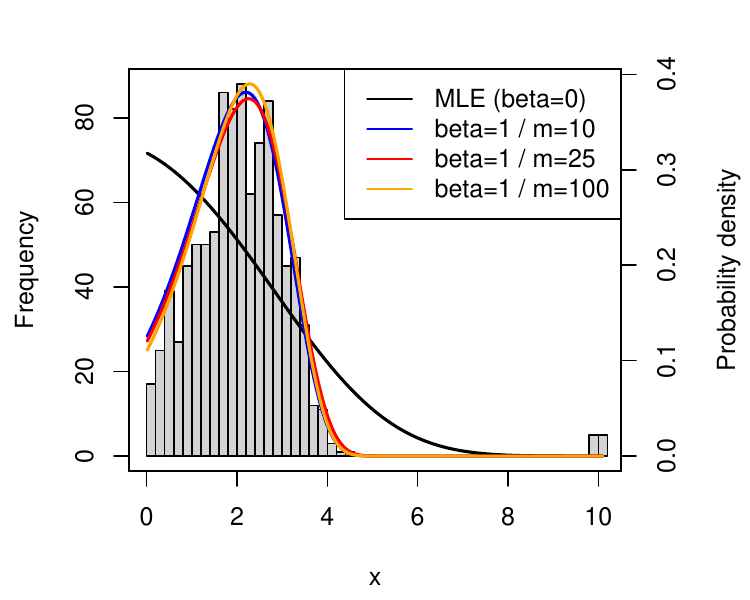}
\subcaption{Gompertz distribution.}
\label{fig:hist_gompertz}
\end{minipage}
\begin{minipage}{0.49\textwidth}
\centering
\includegraphics[width=\textwidth]{./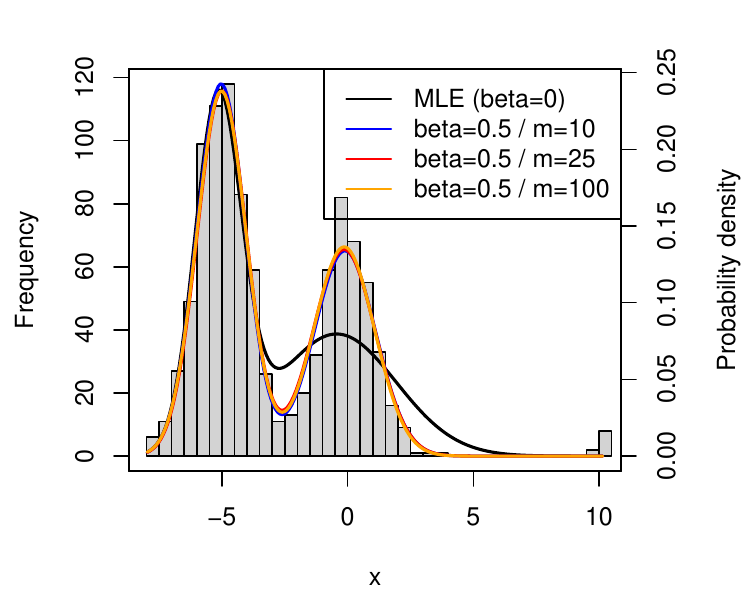}
\subcaption{Normal mixture distribution.}
\label{fig:hist_GM}
\end{minipage}
\caption{Gompertz and normal mixture models, where the explicit forms of integral terms of the powered densities cannot be obtained, are estimated by the proposed stochastic optimization algorithm under the outlier contamination. The black line represents the maximum likelihood estimator, and the colored lines represent the density power estimators. In these experiments, outliers are generated from the normal distribution $N(10,1)$ with the contamination ratio $\xi=0.01$. 
See Section~\ref{sec:experiments} for more details. 
}
\label{fig:illustration}
\end{figure}

\section{Density power estimator}

Background and problem setting are presented in Sections~\ref{subsec:background} and \ref{subsec:problem}, respectively. Our proposed solution, a stochastic gradient descent (SGD) algorithm utilizing an unbiased stochastic gradient, is detailed in Section~\ref{subsec:proposal}. 
Additionally, Section~\ref{subsec:comparison} offers a discussion comparing our approach with GD using numerical integration.

\subsection{Background: density power divergence and estimator}
\label{subsec:background}
Let $d \in \mathbb{N}$ and let $\mathcal{X} \subset \mathbb{R}^d$. 
Suppose that given vectors $x_1,x_2,\ldots,x_n \in \mathcal{X}$ are independently and identically drawn from a distribution $Q$ whose support is $\mathcal{X} \subset \mathbb{R}^d$. 
This study considers estimating the underlying distribution $Q$ by a parametric distribution $P_{\theta}$, using the observed vectors $x_1,x_2,\ldots,x_n$. 
$p_{\theta}$ denotes a probability density function of the distribution $P_{\theta}$. 

Let $\beta>0$ be a user-specified hyperparameter, typically specified as $\beta=0.1,0.5$ or $\beta=1$. 
$\beta$ is also called a power-parameter. 
The \emph{density power divergence}~(DPD, also known as $\beta$-divergence; \citet{basu1998robust}, \citet{eguchi2001robustifing}, \citet{basu2011statistical}) between the underlying distribution $Q$ and the parametric distribution $P_{\theta}$ (equipped with the parameter $\theta \in \Theta \subset \mathbb{R}^s,s \in \mathbb{N}$) is defined by 
\begin{align*}
    D_{\beta}(Q,P_{\theta}):=d_{\beta}(Q,P_{\theta})-d_{\beta}(Q,Q),
\end{align*}
where 
\begin{align}
    d_{\beta}(Q,P_{\theta})
    =
    -\frac{1}{\beta} \int_{\mathcal{X}} p_{\theta}(x)^{\beta} \diff Q(x) 
    +
    r_{\theta}^{(\beta)},
    \quad 
    \left(
    r_{\theta}^{(\beta)}
    =
    \frac{1}{1+\beta} \int_{\mathcal{X}} p_{\theta}(x)^{1+\beta} \diff x
    \right)
    \label{eq:DPCE}
\end{align}
is referred to as \emph{density power cross entropy~(DPCE)}. 
While the above DPD and DPCE are defined for the continuous distribution $Q$, the integral should be replaced with the discrete summation if $Q$ is a discrete distribution. 
DPD can be regarded as a discrepancy measure between two distributions $Q,P_{\theta}$; we can estimate the parameter $\theta$ in the model $P_{\theta}$ by minimizing the DPD. The DPD reduces to the well-known Kullback-Leibler divergence 
$D(Q,P_{\theta})=d(Q,P_{\theta})-d(Q,Q)$, $d(Q,P_{\theta})=\int \log p_{\theta}(x) \diff Q(x)$ by taking the limit $\beta \searrow 0$.

DPD has a robustness property. 
Consider the case that $Q$ is composed of the true distribution $P_{\theta_*}$ and the outlier distribution $R$, i.e., 
\begin{align}
Q = (1-\xi) P_{\theta_*} + \xi R
\label{eq:contaminated_distribution}
\end{align}
with the contamination ratio $\xi > 0$. 
As $R$ represents the outlier distribution, we may assume that $\nu^{(\beta)}(\theta) := \frac{1}{\beta}\int p_{\theta}(x)^{\beta} \diff R(x) \ge 0$ is small enough with positive power-parameter $\beta>0$ and $\theta \approx \theta_*$. 
Then, it holds for $\theta \approx \theta_*$ that
\begin{align*}
    d_{\beta}(Q,P_{\theta})
    &=
    d_{\beta}((1-\xi)P_{\theta_*}+\xi R,P_{\theta})
    =
    d_{\beta}((1-\xi)P_{\theta_*},P_{\theta})
    -
    \nu^{(\beta)}(\theta) \\
    &\approx 
    d_{\beta}((1-\xi)P_{\theta_*},P_{\theta});
\end{align*}
the divergence $d_{\beta}(Q,P_{\theta})$ mitigates the negative impact of the outlier distribution $R$. Consequently, we anticipate that the estimator based on DPD will be in closer proximity to the true underlying parameter $\theta_*$ when compared to the maximum likelihood estimator, which corresponds to the case $\beta=0$.

While the explicit form of the underlying distribution $Q$ is not obtained in practice, the DPCE~\eqref{eq:DPCE} is empirically approximated by substituting the empirical distribution $\hat{Q}(x)=\frac{1}{n}\sum_{i=1}^{n} \mathbbm{1}(x_i \le x)$ into the distribution $Q$. 
The empirical DPCE is defined by 
\begin{align}
    d_{\beta}(\hat{Q},P_{\theta})
    =
    - \frac{1}{\beta} \frac{1}{n} \sum_{i=1}^{n} p_{\theta}(x_i)^{\beta} 
    +
    r_{\theta}^{(\beta)},
    \label{eq:empirical_DPCE}
\end{align}
and the empirical density power~(DP) estimator $\hat{\theta}_{\beta}$ is defined as 
\begin{align}
    \hat{\theta}_{\beta} = \argmin_{\theta \in \Theta} D_{\beta}(\hat{Q},P_{\theta})
    =
    \argmin_{\theta \in \Theta}d_{\beta}(\hat{Q},P_{\theta}).
    \label{eq:DPestimator}
\end{align}
The empirical DP estimator reduces to the maximum likelihood estimator by taking the limit $\beta \searrow 0$.

The definitions and properties of DPD, as mentioned earlier, are extensively covered in existing literature. For more detailed information, see, e.g., \citet{basu1998robust}, \citet{jones2001comparison}, \citet{cichocki2010families}, \citet{basu2011statistical}, and \citet{kanamori2015robust}.

\subsection{Problem: computational difficulty}
\label{subsec:problem}

Unfortunately, computing the DP estimator~\eqref{eq:DPestimator} becomes intractable for general parametric density models $p_{\theta}$ due to the unavailability of an explicit form for the integral term $r_{\theta}^{(\beta)}$. Explicit forms of the integral term $r_{\theta}^{(\beta)}$ are obtained only for certain specific parametric density models, such as the normal density and exponential density. For instance, if $p_{\theta}$ represents the density function of the univariate normal distribution $N(\mu,\sigma^2)$, we can obtain $r_{\theta}^{(\beta)}=(2\pi\sigma^2)^{-\beta/2}(1+\beta)^{-3/2}$. The challenge lies in determining the extent to which we can obtain explicit forms for the integral term $r_{\theta}^{(\beta)}$.

While this extent has not been fully clarified to the best of the authors' knowledge, it is worth noting that the we can calculate the integral term $r_{\theta}^{(\beta)}$ for exponential family distributions whose density functions take the form: 
\begin{align}
    p_{\theta}(x) = h \exp\left( \langle \theta, u(x)\rangle - v(\theta) \right).
    \label{eq:exponential_family}
\end{align}
$h \ge 0, u:\mathcal{X} \to \mathbb{R}^s$ and $v:\Theta \to \mathbb{R}$ are user-specified parameter and functions, respectively. The family of these densities includes normal, exponential, and Gamma distributions, and we obtain the explicit form of $r_{\theta}^{(\beta)}$ as shown in Proposition~\ref{prop:r_exponential}. 
The proof is straightforward as shown in Appendix~\ref{app:proof_prop:r_exponential}.

\begin{prop}
\label{prop:r_exponential}
    $r_{\theta}^{(\beta)}=\frac{1}{1+\beta}h^{\beta}\exp\left( v((1+\beta) \theta)-(1+\beta) v(\theta) \right)$ for the parametric density \eqref{eq:exponential_family}. 
\end{prop}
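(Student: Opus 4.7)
The plan is to compute $r_\theta^{(\beta)} = \frac{1}{1+\beta}\int_{\mathcal{X}} p_\theta(x)^{1+\beta} \diff x$ directly by exploiting the closure of the exponential family \eqref{eq:exponential_family} under taking powers of the density. The key observation is that raising $p_\theta$ to the power $1+\beta$ produces, up to a $\theta$-dependent but $x$-independent factor, another member of the same exponential family, namely the density with natural parameter $(1+\beta)\theta$. Since that density integrates to one, the integral collapses to precisely this prefactor.

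Concretely, I would first expand
\begin{align*}
    p_\theta(x)^{1+\beta} = h^{1+\beta} \exp\bigl( (1+\beta)\langle \theta, u(x)\rangle - (1+\beta) v(\theta) \bigr).
\end{align*}
Then I would insert the identity $1 = \exp\bigl(v((1+\beta)\theta)\bigr)\exp\bigl(-v((1+\beta)\theta)\bigr)$ so as to rewrite the integrand as a constant prefactor times the exponential family density at natural parameter $(1+\beta)\theta$:
\begin{align*}
    p_\theta(x)^{1+\beta} = h^{\beta} \exp\bigl( v((1+\beta)\theta) - (1+\beta) v(\theta) \bigr) \cdot p_{(1+\beta)\theta}(x).
\end{align*}
Integrating both sides over $\mathcal{X}$, the right-hand integral equals $1$ because $p_{(1+\beta)\theta}$ is itself a probability density in the family (assuming $(1+\beta)\theta \in \Theta$, which is implicit in the statement). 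Dividing by $1+\beta$ yields the claimed formula.

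There is essentially no technical obstacle here; the only subtlety worth flagging is the tacit assumption that $(1+\beta)\theta$ remains in the natural parameter space so that $v((1+\beta)\theta)$ is finite and $p_{(1+\beta)\theta}$ is a bona fide density. This is automatic for canonical cases such as the normal, exponential, and Gamma distributions cited after \eqref{eq:exponential_family}, and is the only place where one might need a hypothesis in a fully rigorous statement.
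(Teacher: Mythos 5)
Your proposal is correct and follows exactly the same route as the paper's proof: factor $p_{\theta}(x)^{1+\beta}$ into an $x$-independent prefactor times the density $p_{(1+\beta)\theta}(x)$ by inserting $\exp(v((1+\beta)\theta))\exp(-v((1+\beta)\theta))$, then integrate. Your remark that $(1+\beta)\theta$ must lie in the natural parameter space is a valid point of rigor that the paper leaves tacit, but it does not alter the argument.
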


While there exist several remaining distributions whose explicit form of $r_{\theta}^{(\beta)}$ can be obtained (see, e.g., \citet{juarez2004robust} for generalized Pareto distribution, and \citet{basu2016generalized} for Weibull distribution), $r_{\theta}^{(\beta)}$ is in general still computationally intractable. 
Following the proof shown in Appendix~\ref{app:proof_prop:r_exponential}, we can easily observe that Proposition~\ref{prop:r_exponential} cannot be extended to the exponential family with non-constant base measure $h(x)$, i.e., 
$p_{\theta}(x)=h(x) \exp(\langle \theta,u(x)\rangle-v(\theta))$. 
This family includes inverse-normal distribution, Poisson distribution, and so forth. One possible approach to compute $r_{\theta}^{(\beta)}$ for such density functions is leveraging the numerical integration. 
For instance, if the support of the general parametric density model $p_{\theta}(x)$ is $\mathbb{R}$, we may compute a numerical integration
\begin{align}
    \widehat{r}_{\theta}^{\beta}
    :=
    \frac{1}{1+\beta}
    \frac{2D}{M}
    \sum_{j=1}^{M} p_{\theta}(z_j)^{1+\beta}, 
    \label{eq:naive_numerical_int}
\end{align}
with $z_j=-D+2D(j-1)/(M-1)$. 
In the limit as $D$ and $M$ tend to infinity, it is expected that \eqref{eq:naive_numerical_int} converges to $r_{\theta}^{(\beta)}$. Therefore, we can utilize gradient-based optimization algorithms, using the gradient approximated by the numerical integration \eqref{eq:naive_numerical_int}. 
For the discrete Poisson case, \citet{kawashima2019robust} employs a similar idea for parameter estimation, although they use a slightly different but essentially equivalent divergence called $\gamma$-divergence~\citep{fujisawa2008robust}. 
For the skew normal case, \citet{nandy2022robust} also computes the finite approximation.

However, approximating $r_{\theta}^{(\beta)}$ with high accuracy necessitates choosing large values for $D$ and $M$. 
To make matters worse, minimizing the DPD in this way requires computing the gradient of the numerical integration~\eqref{eq:naive_numerical_int} for each iteration, resulting in non-negligibly high computational complexity.

\subsection{Proposal: a stochastic approach to minimize DPD}
\label{subsec:proposal}

To dodge the high computational complexity described in Section~\ref{subsec:problem}, this study employs a SGD using an unbiased stochastic gradient.

We first describe the idea in an intuitive manner. 
While the approach described in the last of Section~\ref{subsec:problem} intended to compute the ``exact'' integration for optimization, this study considers employing a ``rough'' estimate of the gradient, i.e., a stochastic unbiased estimator of the exact gradient. 
With the stochastic gradient $g(\theta^{(t)}; \zeta^{(t)})$ satisfying 
\begin{align}
\mathbb{E}_{\zeta^{(t)}}(g(\theta^{(t)}; \zeta^{(t)}))=\frac{\partial}{\partial \theta}d_{\beta}(\hat{Q},P_{\theta^{(t)}})
\label{eq:unbiasedness}
\end{align}
(and $\|\mathbb{V}_{\zeta^{(t)}}(g(\theta^{(t)};\zeta^{(t)}))\|<\infty$), whose randomness is specified by the random variable $\zeta^{(t)}$, conventional theories prove that the parameter $\theta^{(t)}$ iteratively updated by SGD
\begin{align}
    \theta^{(t)} = \theta^{(t-1)} - \eta_t g(\theta^{(t-1)}; \zeta^{(t-1)})
    \quad (t=1,2,\ldots,T)
    \label{eq:sgd}
\end{align}
with the decreasing learning rate $\eta_t \searrow 0$ yields the convergence of $\frac{\partial}{\partial \theta}d_{\beta}(\hat{Q},P_{\theta^{(t)}})$ to $0$ (as $t \to \infty$). See Proposition~\ref{prop:convergence} for a more rigorous description. 
Note that the learning rate $\eta_t$ should be decreased to $0$ in stochastic optimization, while the full-batch GD usually considers a constant learning rate.

The stochastic gradient $g(\theta^{(t)}; \zeta^{(t)})$ can incorporate various forms of randomness. One example of the stochastic gradient $g(\theta^{(t)};\zeta^{(t)})$ is $\frac{\partial}{\partial \theta}d_{\beta}(\hat{Q},P_{\theta^{(t)}})+\zeta^{(t)}$, where $\zeta^{(t)}$ is a standard normal random number. Another example is $g(\theta^{(t)};\zeta^{(t)})=2 \zeta^{(t)} \frac{\partial}{\partial \theta}d_{\beta}(\hat{Q},P_{\theta^{(t)}})$ with the random number $\zeta^{(t)}$ assuming values of $1$ and $0$ with probabilities of $1/2$, respectively. In both cases, these constructions satisfy the unbiasedness conditions outlined in \eqref{eq:unbiasedness}, ensuring that SGD effectively minimizes DPD. 
A key point to note here is the flexibility in designing stochastic gradients; we may employ various stochastic gradient, tailored to specific computational or theoretical requirements. As will be discussed later in this section, this study specifically introduces a stochastic gradient formulation based on a finite summation, meticulously defined for efficient computation. 

Traditionally speaking, this type of optimization algorithm has its roots in \citet{robbins1951stochastic}, and a similar idea can be found in maximum likelihood estimation of computationally intractable probability models~\citep{geyer1992constrained,hinton2002training,perpinan2005contrastive}. 
A significant benefit to employing such a stochastic optimization algorithm is that we can dodge the exact computation of the gradient, which is computationally intensive due to the numerical integration~\eqref{eq:naive_numerical_int}.

\bigskip

In DPD-based estimation, we can define an unbiased stochastic gradient $g(\theta^{(t)};\zeta^{(t)})$ as follows. 
Let $\tilde{P}_t$ be a user-specified \revisebegin proposal \reviseend distribution, such that random numbers can be generated from $\tilde{P}_t$. $\tilde{p}_t$ denotes its probability density function. 
\revisebegin 
As also mentioned later in Example~\ref{ex:gaussian_mixture}, we generally assume that 
the proposal distribution $\tilde{p}_t$ has the same support as $p_{\theta^{(t)}}$, and $\tilde{p}_t$ should be square integrable. 
\reviseend 
With functions $w_t(y)=p_{\theta^{(t)}}(y)/\tilde{p}_t(y)$ \revisebegin (where we assume $0/0=0$) \reviseend and $t_{\theta}(x)=\partial \log p_{\theta}(x)/\partial \theta$, we randomly generate independent $m \in \mathbb{N}$ samples $\zeta^{(t)}=(y_1^{(t)},y_2^{(t)},\ldots,y_m^{(t)})$ from $\tilde{P}_t$, and define
\begin{align}
    g(\theta^{(t)};\zeta^{(t)})
    &=
    -\frac{1}{n}\sum_{i=1}^{n}p_{\theta^{(t)}}(x_i)^{\beta}t_{\theta^{(t)}}(x_i) \nonumber \\
    &\hspace{7em}+
    \frac{1}{m}\sum_{j=1}^{m}
    w_t(y_j^{(t)})
    p_{\theta^{(t)}}(y_j^{(t)})^{\beta}t_{\theta^{(t)}}(y_j^{(t)}).
    \label{eq:sfo_gradient}
\end{align}

While we employ $\tilde{P}_t=P_{\theta^{(t)}}$ in our experiments (so as to obtain constant weight function $w_t(y)=1$) for simplicity, we may employ normal distribution or some other computationally-tractable distributions to generate $\zeta^{(t)}$. 

Interestingly, \eqref{eq:sfo_gradient} obviously satisfies the unbiasedness assumption~\eqref{eq:unbiasedness} (by taking the expectation with respect to $\zeta^{(t)}$), \emph{regardless of the sample size $m \in \mathbb{N}$}. 
Even if we employ $m=1$, the SGD equipped with \eqref{eq:sfo_gradient} is proved to minimize DPD (though the optimization procedure can be slightly unstable when $m \in \mathbb{N}$ is excessively small); this approach much reduces the computational complexity, as it does not need to take the limit $m \to \infty$ unlike the numerical integration~\eqref{eq:naive_numerical_int}. 
Our numerical experiments in Section~\ref{sec:experiments} demonstrate that $m=10$ is enough for plausible computation. Further discussions on the comparison with GD using numerical integration can be found in Section~\ref{subsec:comparison}. 

A slightly simpler version of the convergence theorem shown in \citet{ghadimi2013stochastic} Theorem 2.1(a) is summarized in Proposition~\ref{prop:convergence}. 
While this study considers only the simple stochastic algorithm, several options such as variance reduction~\citep{wang2013variance} and their theories can also be incorporated into our approach.

\begin{prop}
\label{prop:convergence} 
Let $m \in \mathbb{N},L,v>0$ and let $\Theta=\mathbb{R}^s$. 
Let $\|\theta\|=\{\theta_1^2+\theta_2^2+\cdots+\theta_s^2\}^{1/2}$ be $2$-norm. 
Assume that 
\begin{enumerate}[{(i)}]
\item $f(\theta)=d_{\beta}(\hat{Q},P_{\theta})$ is differentiable over $\Theta$, 
\item $\|\partial f(\theta)/\partial \theta-\partial f(\theta')/\partial \theta\| \le L\|\theta-\theta'\|$ for any $\theta,\theta' \in \Theta$, 
\item $\mathbb{E}_{\zeta^{(t)}}(g(\theta^{(t)};\zeta^{(t)}))=\partial f(\theta^{(t)})/\partial \theta$, and 
\item $\mathbb{E}_{\zeta^{(t)}}(\|g(\theta^{(t)};\zeta^{(t)})-\partial f(\theta^{(t)})/\partial \theta\|^2) \le v$,  
\end{enumerate}
for any $\theta^{(t)} \in \Theta$ and for any $t \in \{1,2,\ldots,T\}$. 
Also assume that the learning rate $\{\eta_t\}_{t=1}^{T}$ satisfies 
$\eta_t < 2/L$, $\sum_{t=1}^{T} \eta_t \to \infty$ and $\{\sum_{t=1}^{T}\eta_t\}^{-1}\sum_{t=1}^{T}\eta_t^2 \to 0$ as $T \to \infty$. 
Then, the sequence $\{\theta^{(t)}\}$ obtained by the SGD~\eqref{eq:sgd} satisfies 
\[
    \mathbb{E}_{\tau}\left( 
    \bigg\|
    \frac{\partial}{\partial \theta}d_{\beta}(\hat{Q},P_{\theta^{(\tau)}}) 
    \bigg\|^2
    \right)
    \toprob
    0 
    \quad 
    (T \to \infty). 
\]
$\mathbb{E}_{\tau}$ represents the expectation with respect to the number of iterations $\tau \in \{1,2,\ldots,T\}$ randomly chosen with the probability $\mathbb{P}(\tau=k \mid T)=\{2\eta_k-L\eta_k^2\}/\sum_{k=1}^{T}\{2\eta_k-L\eta_k^2\}$ ($k=1,2,\ldots,T$). 
\end{prop}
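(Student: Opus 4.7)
The plan is to follow the standard descent-lemma argument for SGD under $L$-smoothness, essentially tracing through Ghadimi--Lan Theorem 2.1(a). First I would use assumption (ii), i.e.\ $L$-Lipschitz continuity of $\partial f/\partial\theta$, to derive the descent inequality
\[
    f(\theta^{(t+1)}) \le f(\theta^{(t)}) + \left\langle \frac{\partial f(\theta^{(t)})}{\partial\theta},\, \theta^{(t+1)}-\theta^{(t)}\right\rangle + \frac{L}{2}\|\theta^{(t+1)}-\theta^{(t)}\|^2.
\]
Substituting the SGD update $\theta^{(t+1)}-\theta^{(t)} = -\eta_t g(\theta^{(t)};\zeta^{(t)})$, taking the conditional expectation over $\zeta^{(t)}$ given the past iterates, and applying assumptions (iii) and (iv) together with the variance decomposition $\mathbb{E}\|g\|^2 = \|\partial f/\partial\theta\|^2 + \mathbb{E}\|g - \partial f/\partial\theta\|^2 \le \|\partial f/\partial\theta\|^2 + v$, I would obtain
\[
    \mathbb{E}_{\zeta^{(t)}}[f(\theta^{(t+1)})] \le f(\theta^{(t)}) - \frac{1}{2}(2\eta_t - L\eta_t^2)\bigg\|\frac{\partial f(\theta^{(t)})}{\partial\theta}\bigg\|^2 + \frac{L\eta_t^2 v}{2}.
\]
The coefficient $2\eta_t-L\eta_t^2$ is positive by the hypothesis $\eta_t<2/L$, so this is a genuine one-step decrease in expectation up to a noise term of order $\eta_t^2$.

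Next I would take the full expectation over all SGD randomness and telescope from $t=1$ to $T$. Denoting $f_\star := \inf_{\theta\in\Theta} f(\theta)$, which I would assume (or justify for DPCE) to be finite, this yields
\[
    \sum_{t=1}^{T} (2\eta_t - L\eta_t^2)\, \mathbb{E}\bigg\|\frac{\partial f(\theta^{(t)})}{\partial\theta}\bigg\|^2 \le 2\bigl(f(\theta^{(1)}) - f_\star\bigr) + Lv \sum_{t=1}^{T}\eta_t^2.
\]
Observing that the randomized index $\tau$ has mass $\mathbb{P}(\tau=t\mid T) = (2\eta_t-L\eta_t^2)/\sum_k(2\eta_k-L\eta_k^2)$, the left-hand side is exactly $\sum_t (2\eta_t-L\eta_t^2)\cdot \mathbb{E}[\mathbb{E}_\tau \|\partial f(\theta^{(\tau)})/\partial\theta\|^2]$, so dividing through gives
\[
    \mathbb{E}\!\left[\mathbb{E}_{\tau}\bigg\|\frac{\partial f(\theta^{(\tau)})}{\partial\theta}\bigg\|^2\right] \le \frac{2(f(\theta^{(1)})-f_\star) + Lv\sum_{t=1}^{T}\eta_t^2}{\sum_{t=1}^{T}(2\eta_t - L\eta_t^2)}.
\]

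Finally I would let $T\to\infty$. Since $\eta_t<2/L$ implies $2\eta_t-L\eta_t^2 \ge c\,\eta_t$ for some $c>0$ depending on an upper bound on $\eta_t$, the denominator is bounded below by $c\sum_t \eta_t\to\infty$, killing the first term in the numerator; the second term contributes $O((\sum_t\eta_t)^{-1}\sum_t\eta_t^2)\to 0$ by the stated learning-rate condition. Hence the full expectation of the non-negative quantity $\mathbb{E}_\tau\|\partial f(\theta^{(\tau)})/\partial\theta\|^2$ tends to $0$, and Markov's inequality upgrades this $L^1$ convergence to convergence in probability, which is the stated claim.

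The argument is essentially routine bookkeeping once the descent lemma is in place; the only genuinely non-mechanical point I would need to justify is that $f_\star>-\infty$ on $\Theta=\mathbb{R}^s$, since the telescoping bound is vacuous otherwise. For DPCE this is usually clear from the structure of $-(n\beta)^{-1}\sum_i p_\theta(x_i)^\beta + r_\theta^{(\beta)}$, but for pathological density families (e.g.\ Gaussians with vanishing variance) some mild additional regularity on $p_\theta$ would have to be imposed; I would mention this as a standing assumption rather than attempt to verify it in generality.
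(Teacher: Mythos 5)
Your proof is correct and follows essentially the same route as the paper, which does not give a self-contained argument but defers to Theorem 2.1(a) of Ghadimi and Lan (2013); that theorem's proof is exactly your descent-lemma, telescoping, randomized-index, Markov-inequality chain. Your observation that $f_\star=\inf_\theta f(\theta)>-\infty$ is an implicit additional hypothesis (guaranteed in the paper's Example~1 by the variance floor $\varepsilon>0$) is a fair and worthwhile caveat, since the proposition as stated omits it.
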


The number of iterations, denoted as $\tau$, is randomly determined according to the probability $\mathbb{P}(\tau=k \mid T)$. This randomness is crucial for addressing the challenges posed by the non-convexity of the function $f(\theta)$ to be minimized. 

considering $\lim_{T \to \infty} \mathbb{P}(\tau \le T' \revisebegin \mid T \reviseend)=0$ for any fixed $T' \in \mathbb{N}$, it becomes apparent that a larger value of $\tau$ is more likely to be selected as $T$ increases.
Additionally, given the constraint $\{\sum_{t=1}^{T} \eta_t\}^{-1}\sum_{t=1}^{T}\eta_t^2 \to 0$, which necessitates a decreasing learning rate $\eta_t$, the expectation is effectively taken over a smaller region around the local minima in the parameter space $\Theta$, as the learning rates $\eta_t$ are small at sufficiently large number of iterations $t \in \mathbb{N}$. 
Therefore, this theorem suggests that at sufficiently large iteration numbers, the estimated parameter forces the gradient of $d_{\beta}(\hat{Q},P_{\theta^{(t)}})$ to approximate $0$.

While assumptions (i)--(iv) in Proposition~\ref{prop:convergence} may appear restrictive, especially in the outer regions of the parameter space $\Theta$, it is possible to tailor the functions (only in the outer region, which is typically less relevant for parameter estimation) to meet the conditions (i) to (iv). These conditions (i)--(iv) are essential for mathematical rigor and completeness. 
An example of densities satisfying these conditions (i)--(iv) is shown in Example~\ref{ex:gaussian_mixture}. 

\begin{ex}\label{ex:gaussian_mixture}
Let $\phi(x;\mu,\sigma^2)$ represents the univariate normal density with mean $\mu \in \mathbb{R}$ and variance $\sigma^2>0$. Let $\rho(z):=(1+\exp(-z))^{-1}$ be the sigmoid function and let $\varepsilon>0$ be a sufficiently small constant. 
Let $p_{\theta}(x)=\rho(\theta_1)\phi(x;\theta_2,\theta_3^2+\varepsilon)+\{1-\rho(\theta_1)\} \phi(x;\theta_4,\theta_5^2+\varepsilon)$ be a normal mixture density model defined with the parameter vector $\theta=(\theta_1,\theta_2,\theta_3,\theta_4,\theta_5) \in \Theta = \mathbb{R}^5$. 
\revisebegin 
We assume the following (C-1)--(C-3) on the proposal density $\tilde{p}_t$: 
(C-1) it has the same support as $p_{\theta^{(t)}}$, 
(C-2) it is square-integrable, and 
(C-3) $\mathbb{E}(w_t(y)^4)<\infty$ for $y \sim \tilde{p}_t$, where $w_t(y)=p_{\theta^{(t)}}(y)/\tilde{p}_t(y)$. 
(For instanfce, $\tilde{p}_t=p_{\theta^{(t)}}$ satisfy (C-1)--(C-3).) 
\reviseend 
Then, (i)--(iv) in Proposition~\ref{prop:convergence} hold. 
\end{ex}

In Appendix~\ref{app:proof_of_Example_ex:gaussian_mixture}, the normal mixture density defined in Example~\ref{ex:gaussian_mixture} is proved to satisfy the assumptions (i)--(iv). 
In Example \ref{ex:gaussian_mixture}, the mixture rate is estimated via the sigmoid function $\rho(\theta_1)$ so that $\theta_1$ can take values over the entire real line $\mathbb{R}$, and the variance of each normal density is lower-bounded by $\varepsilon>0$. While this $\varepsilon>0$ is needed to ensure that the normal density has non-zero variance, this restriction can be practically ignored by specifying small $\varepsilon>0$. 
\revisebegin 
Also, by ignoring the mixture rate and the second density and following the same proof, we can easily prove that the non-mixtured, single normal model $p_{\theta}(x)=\phi(x;\theta_1,\theta_2^2+\varepsilon)$ also satisfies (i)--(iv) in Propositon~\ref{prop:convergence}. 
\reviseend
We have also verified the practical convergence of SGD through numerical experiments, as detailed in Section~\ref{sec:experiments}.

\bigskip
A key point of the proposed approach is that the stochastic gradient is used to approximate the gradient of the integral term $r_{\theta}^{(\beta)}$ but not the finite summation $-(\beta n)^{-1}\sum_{i=1}^{n}p_{\theta}(x_i)^{\beta}$ discussed in \citet{kawashima2019robust}. 
To the best of the authors' knowledge, none of the previous studies have considered this simple but effective approach.


Furthermore, in addition to the simple optimization problem discussed above, it is worth noting a challenging attempt in recent research by \citet{yonekura2023adaptation}. They focus on Bayesian inference, particularly computing the posterior distribution, while not solely aiming at estimating a single-point estimator $\hat{\theta}_{\beta}$. Their primary objective is to select the power-parameter $\beta$ using the approach presented in \citet{jewson2022ngeneral}. In their work, the power-parameter $\beta$ (rather than $\theta$) is updated using GD, with the gradient being stochastically approximated through sequential Monte Carlo samplers. Although their approach has the potential to be applied with general parametric models, it addresses a slightly different Bayesian problem. 
Their numerical experiments exclusively employ normal densities, and they generate a substantial number of Monte Carlo samples ($m=2000$) to approximate the gradient in each iteration. In contrast, our experiments consider smaller sample sizes for $m$ (even $m=3$), yet still yield meaningful results with theoretical guarantee.

\subsection{Discussion: a comparison with numerical integration-based approaches}

\label{subsec:comparison}

This section discusses the comparison with GD using numerical integration~\eqref{eq:naive_numerical_int}. 
The advantages of the stochastic optimization approach far outweigh the disadvantages as follows. 

\begin{enumerate}[{(i)}]
\item \textbf{Learning rate}. A key distinction between SGD and GD lies in their typical learning rates. SGD needs to employ a decreasing learning rate, which serves to mitigate the randomness inherent in the stochastic gradient, thereby facilitating convergence. 
In contrast, GD generally employs a constant learning rate, as the full-batch gradient naturally converges to $0$. There is no need for the learning rate in GD to reduce to 0.

\item \textbf{Approximation error}. 
As the number of iterations $T$ approaches infinity, stochastic optimization estimators converge to the precise (local) minima. However, GD employing numerical integration \eqref{eq:naive_numerical_int} with a typical constant learning rate incurs a finite approximation error of order $O(1/\sqrt{M})$, regardless of the iteration count $T$ reaching infinity. 
Nonetheless, by adopting a decreasing learning rate and implementing numerical integration using random numbers drawn independently in each iteration, GD can be regarded as a variation of SGD with a large $m \in \mathbb{N}$. Therefore, in this modified approach, GD also attains the exact estimator as $T$ approaches infinity.

\item \textbf{Computational efficiency}. The stochastic optimization approaches need to generate $m \in \mathbb{N}$ samples, and convergence is proven regardless of the value of $m$. On the other hand, GD using numerical integration generates $M$ samples, and $M$ should be large to reduce the approximation error; its computational complexity is then significantly increased compared to stochastic approaches. 

\item \textbf{Ease of implementation}. As stochastic optimization is extensively employed to optimize deep neural networks in recent years~\citep{Goodfellow-et-al-2016}, many of practical packages are provided. For instance, we may employ \verb|PyTorch|\footnote{\url{https://pytorch.org/}} for implementation.

\item \textbf{Compatibility with non-convex optimization problems}. Robust density power-based divergences are generally known to be non-convex functions with respect to the model parameter $\theta \in \Theta$. As a result, optimization algorithms may get trapped in local minima. However, recent studies have shown that stochastic algorithms have the potential to escape from local solutions due to the increased randomness in each iteration. See, e.g., \citet{jin2017how} and \citet{jin2021nonconvex} that provide more insights into this phenomenon. 
Although it is mainly studied to elucidate the success of deep learning, this property is also useful for robust estimation.  
\end{enumerate}

For more rigorous analyses, also see \citet{nemirovski2009robust} that theoretically proves for certain problem classes that the stochastic optimization significantly outperforms the approach based on numerical integration. 
In Section~\ref{subsec:exp_comparison}, we also compare the proposed stochastic approach with the GD using numerical integration.

\section[Application to the minimization of gamma-divergence]{Application to the minimization of $\gamma$-divergence}
\label{sec:gamma}

More recently, the $\gamma$-divergence~\citep{fujisawa2008robust,kanamori2015robust} $D_{\gamma}(Q,P_{\theta})=d_{\gamma}(Q,P_{\theta})-d_{\gamma}(Q,Q)$ defined with the $\gamma$-cross entropy~(GCE)
\begin{align}
    d_{\gamma}(Q,P_{\theta})
    =
    -\frac{1}{\gamma} \log \int_{\mathcal{X}} p_{\theta}(x)^{\gamma} \diff Q(x)
    +
    \frac{1}{1+\gamma} \log \int_{\mathcal{X}} p_{\theta}(x)^{1+\gamma} \diff x
    \label{eq:GCE}
\end{align}
has attracted considerable attention~\citep{chen2013robust,dawid2016minimum,futami2018variational,castilla2022robust,li2022robust}. 
$\gamma$-divergence is equivalent to a pseudo-spherical score~\citep{good1971comment}. 
$\gamma$-divergence has similar robust properties as the DPD, and it also comprises the integral of the powered density, which is in general computationally intractable. 
While several optimization approaches for $\gamma$-divergence including the Majorize-Minimization algorithm~\citep{hunter2004tutorial,hirose2017robust} have been developed, the parametric density models are still limited to normal density or several specific ones discussed so far.

The proposed approach cannot be directly employed for the optimization of GCE~\eqref{eq:GCE} since applying the $\log$ function to the integral term introduces bias in the stochastic gradient. 
Nevertheless, our approach can still be utilized to minimize the GCE with the aid of unnormalized models, by following the discussion in \citet{kanamori2015robust}. 
An unnormalized model is defined as a general nonnegative function $f:\mathcal{X} \to \mathbb{R}_{\ge 0}$, while the probability density function should satisfy the integral constraint $\int f(x) \diff x=1$. \citet{kanamori2015robust} provides an important identity between minimizers of GCE and DPCE:
\[
    \hat{\theta}_{\gamma}
    =
    \argmin_{\theta \in \Theta}d_{\gamma}(\hat{Q},P_{\theta})
    =
    \argmin_{\theta \in \Theta}\left\{ 
        \min_{c > 0} d_{\beta}(\hat{Q}, cP_{\theta})
    \right\} \bigg|_{\beta = \gamma}. 
\]
$c>0$ is called a scale parameter to be estimated. Using this identity, we can minimize GCE as follows.

Let $\zeta^{(t)}=(y_1^{(t)},\ldots,y_m^{(t)})$ be i.i.d. generated from the distribution $\tilde{P}_t$, where $\tilde{P}_t,t_{\theta}$ and the weight function $w_t$ are defined in Section~\ref{subsec:proposal}. 
Consider a SGD for the augmented parameter $\psi=(\theta,c)$: 
\begin{align}
    \psi^{(t)} = \psi^{(t-1)} - \eta_t \tilde{g}(\psi^{(t-1)}; \zeta^{(t-1)}),
\end{align}
where 
\begin{align}
    \tilde{g}(\psi^{(t)};\zeta^{(t)})
    &=
    (
    \tilde{g}^{(\theta)}(\tilde{\psi}^{(t)};\zeta^{(t)}),
    \tilde{g}^{(c)}(\tilde{\psi}^{(t)};\zeta^{(t)})
    ), 
    \label{eq:sfo_gradient_augmented}
    \\
    \tilde{g}^{(\theta)}(\psi^{(t)};\zeta^{(t)})
    &=
    -
    (c^{(t)})^{\gamma}
    \frac{1}{n}\sum_{i=1}^{n} p_{\theta^{(t)}}(x_i)^{\gamma}t_{\theta^{(t)}}(x_i) \nonumber \\
    &\hspace{7em}+
    (c^{(t)})^{1+\gamma}
    \frac{1}{m}\sum_{j=1}^{m}
    w_t(y_j^{(t)}) p_{\theta^{(t)}}(y_j^{(t)})^{\gamma}t_{\theta^{(t)}}(y_j^{(t)}), \nonumber \\
    \tilde{g}^{(c)}(\psi^{(t)};\zeta^{(t)})
    &=
    -(c^{(t)})^{\gamma-1}\frac{1}{n}\sum_{i=1}^{n}p_{\theta^{(t)}}(x_i)^{\gamma} \nonumber \\
    &\hspace{7em}+
    (c^{(t)})^{\gamma}
    \frac{1}{m}\sum_{j=1}^{m}w_t(y_j^{(t)})p_{\theta^{(t)}}(y_j^{(t)})^{\gamma}. \nonumber
\end{align}
As the stochastic gradient \eqref{eq:sfo_gradient_augmented} is an unbiased estimator of the gradient, i.e., $\mathbb{E}_{\zeta^{(t)}}(\tilde{g}(\psi^{(t)};\zeta^{(t)})) =  \frac{\partial}{\partial \psi}d_{\beta}(\hat{Q},c^{(t)} p_{\theta^{(t)}})$, 
the SGD equipped with \eqref{eq:sfo_gradient_augmented} is expected to produce the scale estimator $\hat{c}_{\gamma}$ and the $\gamma$-estimator $\hat{\theta}_{\gamma}$, which minimizes the $\gamma$-divergence.


\section{Numerical experiments}
\label{sec:experiments}

This section describes the numerical experiments. 
Note that \verb|R| source codes to reproduce the experimental results are provided in \url{https://github.com/oknakfm/DPD}. 
\revisebegin 
Although the implementation details slightly differ from those used in our numerical experiments, we also offer an R package for the proposed approach, available at \url{https://github.com/oknakfm/sgdpd}.
\reviseend

\subsection{Robust parameter estimation}
\label{subsec:estimation_intractable}

To demonstrate the proposed approach, we synthetically generate $n=1000$ observations $x_1,x_2,\ldots,x_n$ from the contaminated distribution~\eqref{eq:contaminated_distribution}:
\[
    Q=(1-\xi)P_{\theta_*}+\xi R. 
\]
Particularly, we employ for (the outlier distribution) $R$ the normal distribution with mean $\mu=10$ and the standard deviation $\sigma=1$. 
For the parametric distribution $P_{\theta}$, we empoly the following four types:
\begin{enumerate}
    \item \textbf{Normal distribution}, with the true parameter 
    \[
    \theta_*=(\mu_*,\sigma_*)=(0,1).
    \]
    \item \textbf{Inverse normal distribution}, with the true parameter 
    \[
    \theta_*=(\mu_*,\lambda_*)=(1,3).
    \]
    \item \textbf{Gompertz distribution}, with the true parameter 
    \[
    \theta_*=(\omega_*,\lambda_*)=(1,0.1).
    \]
    \item \textbf{Normal mixture distribution}, with the true parameter 
    \[
    \theta_*=(\mu_{1*},\sigma_{1*},\mu_{2*},\sigma_{2*},\alpha_*)=(-5,1,0,1,0.6).
    \]
\end{enumerate}
See Appendix~\ref{app:distribution} for the definitions of the distributions (ii)--(iv). 
We compute the density power estimators by conducting the SGD with the stochastic gradient~\eqref{eq:sfo_gradient}. The parameters to be estimated are initialized by the maximum likelihood estimator. 
Learning rate $\eta_t$ is decreased by multiplying the decay rate $r=0.7$ for each $25$ iterations: 
remaining settings $(T,\eta_0)$ of the SGD are (i) $(500,1)$, (ii) $(1000,1)$, (iii) $(1000,0.5)$, (iv) $(1000,1)$. 
Contamination ratios are $\xi=0.1$ for (i) and (ii), and $\xi=0.01$ for (iii) and (iv). 
Estimation results of (i)--(iv) are shown in Figures~\ref{fig:hist_normal}, \ref{fig:hist_IG}, \ref{fig:illustration}(\subref{fig:hist_gompertz}) and \ref{fig:illustration}(\subref{fig:hist_GM}), respectively. 

\bigskip

Firstly, for the setting (i) normal distribution, we can compute the exact empirical DPCE by virtue of Proposition~\ref{prop:r_exponential} (while the integral terms cannot be computed for the remaining settings (ii)--(iv)). 
We monitor the empirical DPCE via the SGD iterations. See Figure~\ref{fig:empiricalDPCE}. We can observe that the SGD significantly reduces the DPCE. 
All the DPCE get close to almost the same value regardless of the sample size $m \in \mathbb{N}$, while the DPCE sequence has a larger variance if smaller $m$ is employed.

\begin{figure}[!ht]
\centering
\includegraphics[width=0.3\textwidth]{./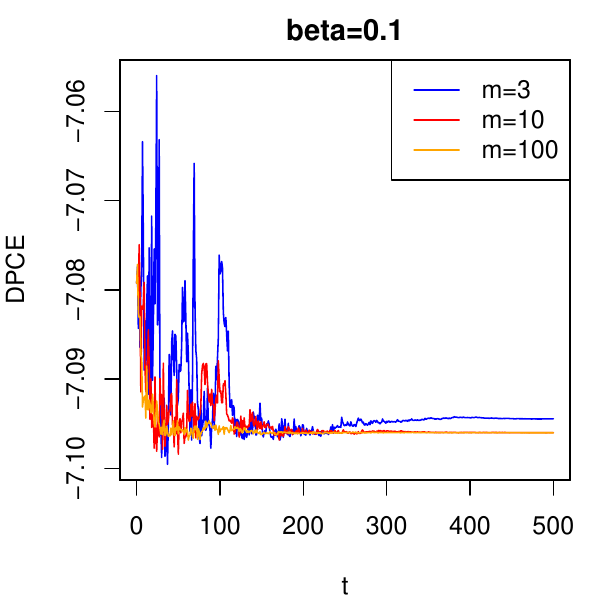}
\includegraphics[width=0.3\textwidth]{./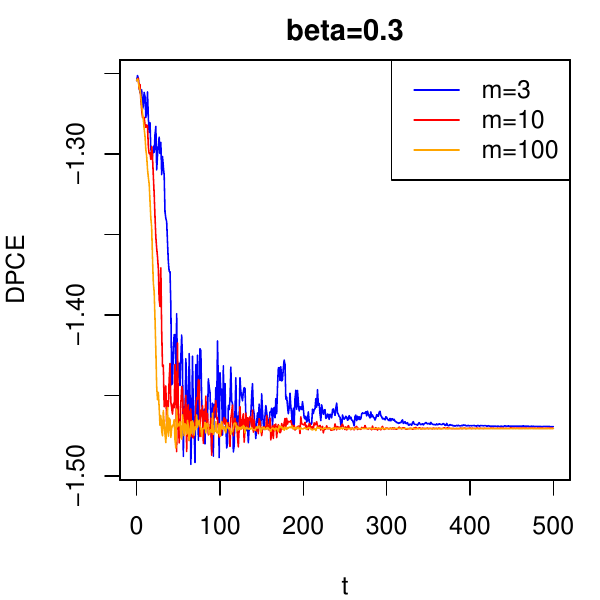}
\includegraphics[width=0.3\textwidth]{./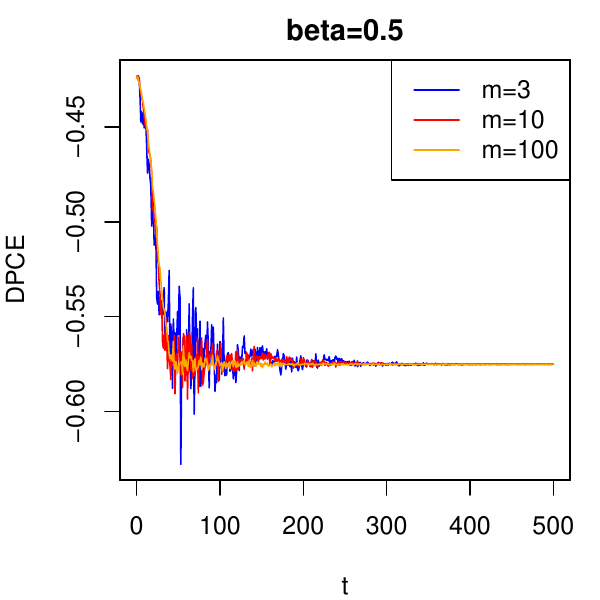}
\caption{Convergence of the (empirical) DPCE when $P_{\theta}$ is a normal distribution. 
\label{fig:empiricalDPCE}
}
\end{figure}

Also see Figure~\ref{fig:hist_normal}; the DPD-based estimators fit to the true normal distribution robustly against the outlier distribution $R=N(10,1)$. The estimated results are not much different for different $m \in \mathbb{N}$, and larger $\beta>0$ provides more robust estimators. 
For the settings (ii)--(iv), see Figures~\ref{fig:hist_IG}, \ref{fig:illustration}(\subref{fig:hist_gompertz}), and \ref{fig:illustration}(\subref{fig:hist_GM}). 
Almost the same tendency can be observed.

\begin{figure}[!ht]
\centering 
\begin{minipage}{0.47\textwidth}
\includegraphics[width=\textwidth]{./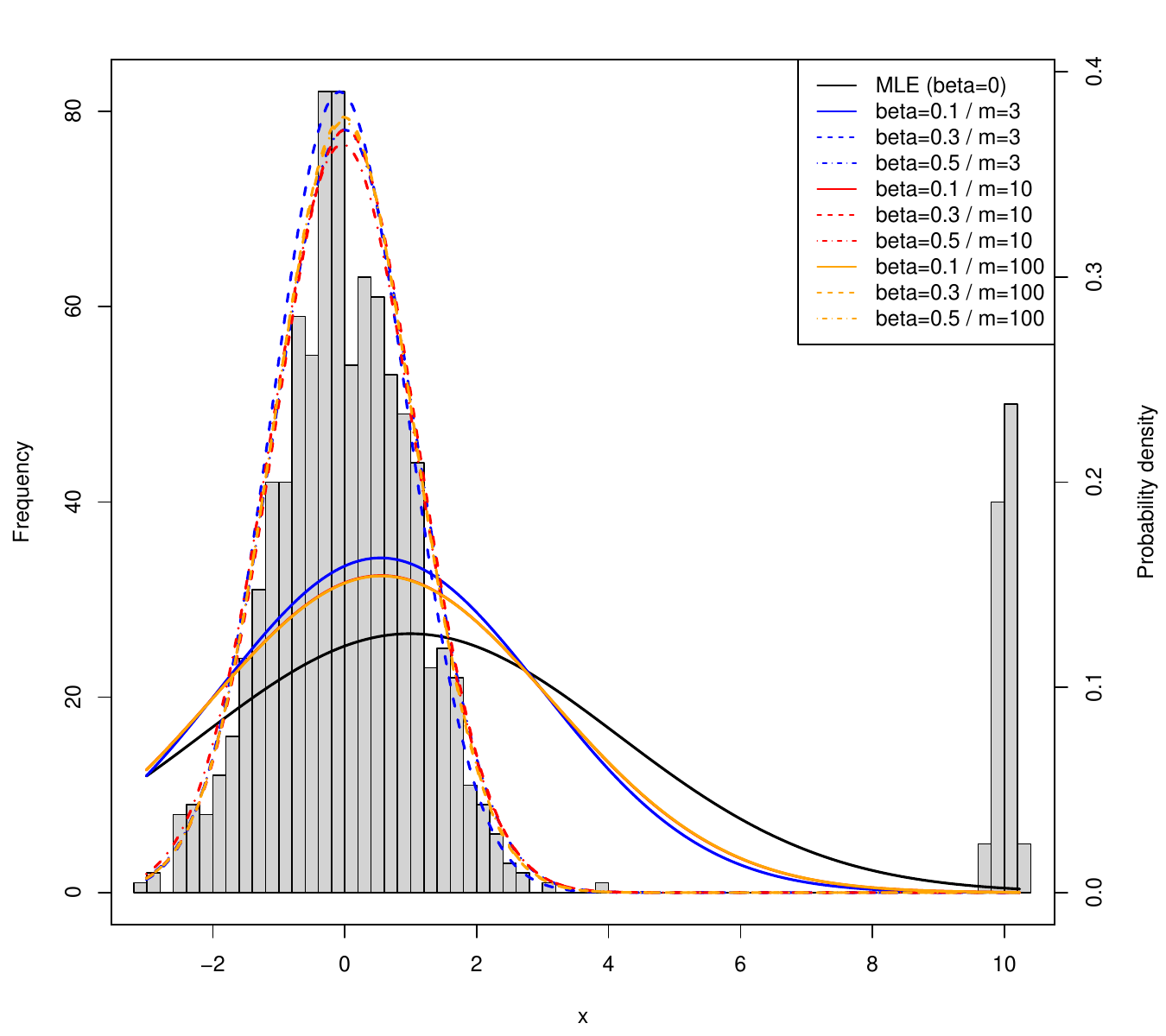}
\caption{\textbf{Normal distribution} with the true parameter $(\mu_*,\sigma_*)=(0,1)$.}
\label{fig:hist_normal}
\end{minipage}
\hspace{1em}
\begin{minipage}{0.47\textwidth}
\includegraphics[width=\textwidth]{./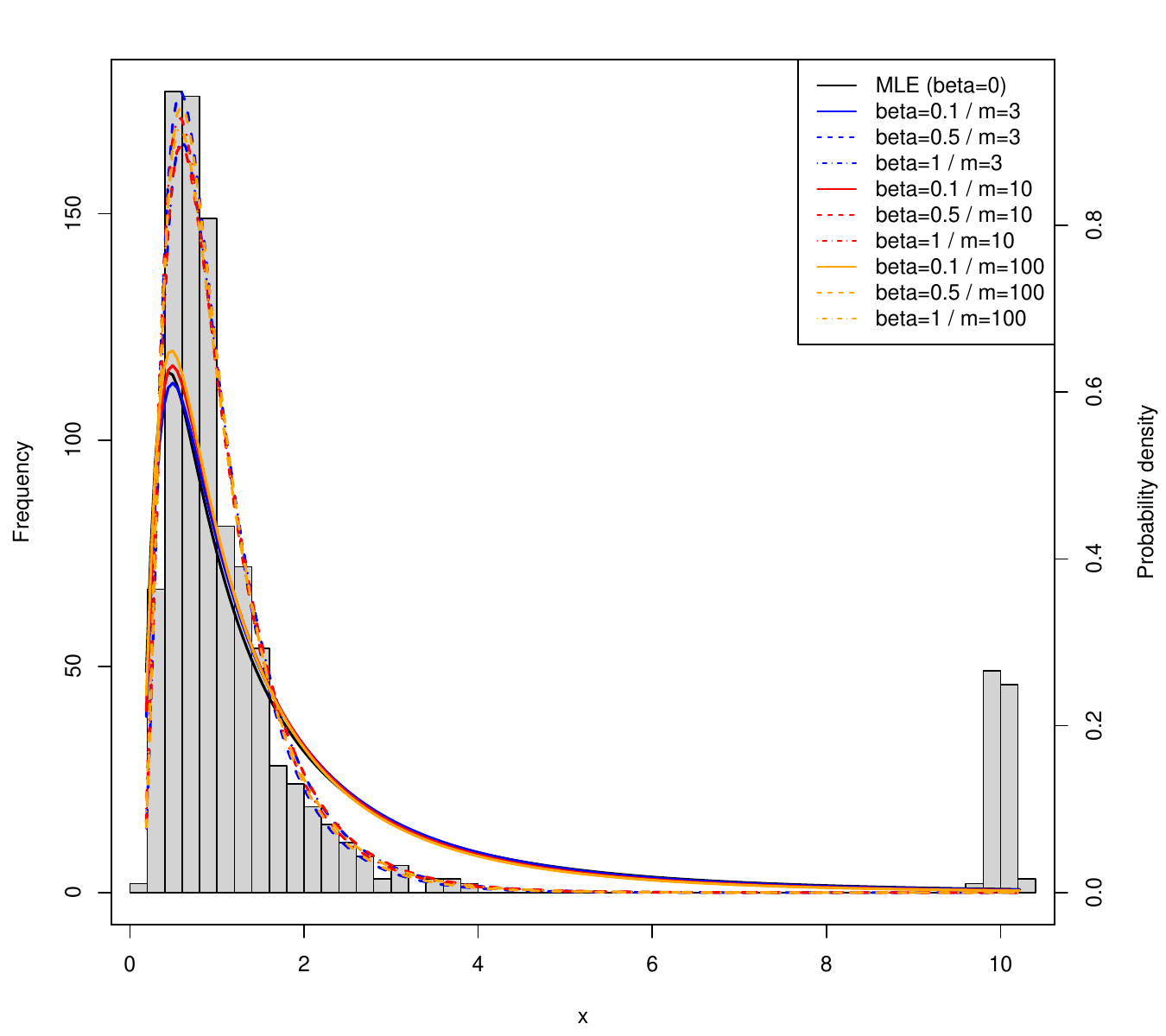}
\caption{\textbf{Inverse normal dist.} with the true parameter $(\mu_*,\lambda_*)=(1,3)$.}
\label{fig:hist_IG}
\end{minipage}
\end{figure}


Using the same setup as in (i) with the normal distribution, we also minimize the $\gamma$-cross entropy (GCE). The settings remain consistent, and the initial value of the scale parameter $c>0$ is set to $c=1$. We computed the exact values of empirical GCE, as depicted in Figure~\ref{fig:empiricalGCE}. It is noticeable that the empirical GCE decreases similarly to the empirical DPCE. The resulting distributions closely resemble those of the DPD while the figures are not included in this text due to the space limitation. 
Throughout the iterations, we also monitor the scale parameter $\hat{c}>0$ for unnormalized models, as discussed in Section~\ref{sec:gamma}. This is illustrated in Figure~\ref{fig:scaling}. It is apparent that the scale parameter converges to the true value $1-\xi=0.9$.

\begin{figure}[!ht]
\centering
\includegraphics[width=0.3\textwidth]{./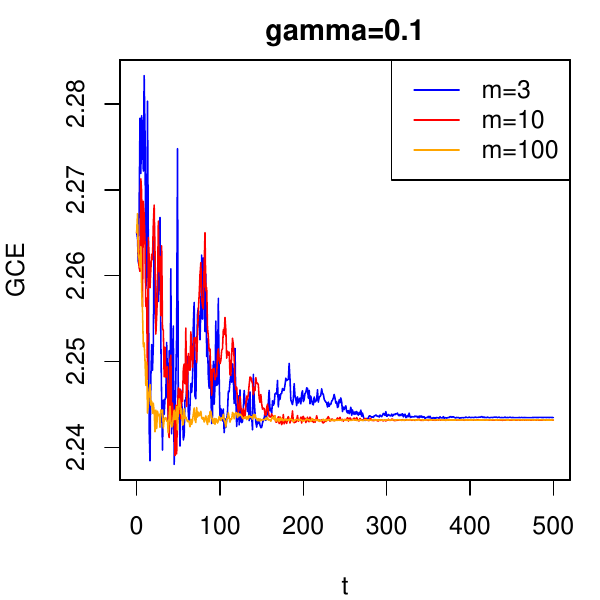}
\includegraphics[width=0.3\textwidth]{./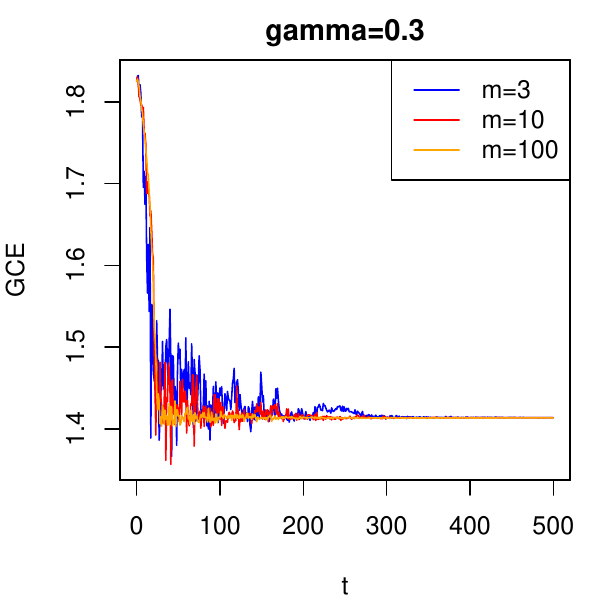}
\includegraphics[width=0.3\textwidth]{./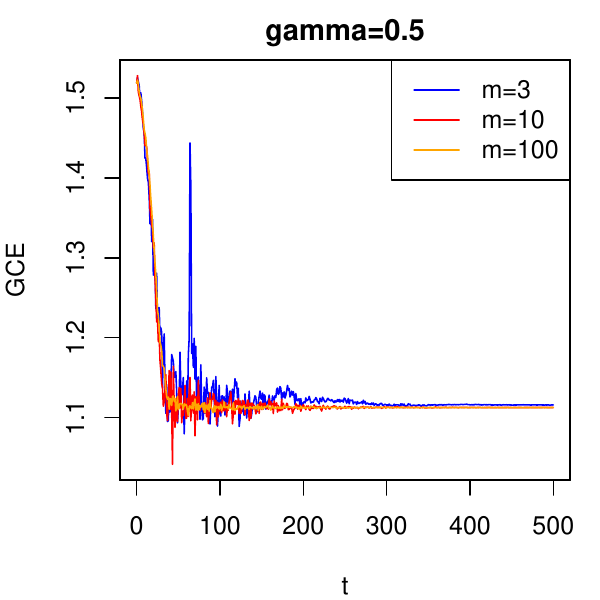}
\caption{Convergence of the (empirical) GCE when $P_{\theta}$ is a normal distribution.}
\label{fig:empiricalGCE}
\end{figure}

\begin{figure}[!ht]
\centering
\includegraphics[width=0.3\textwidth]{./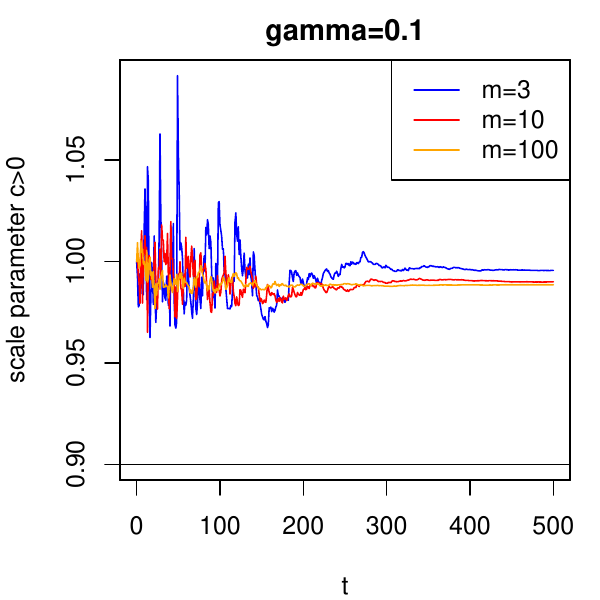}
\includegraphics[width=0.3\textwidth]{./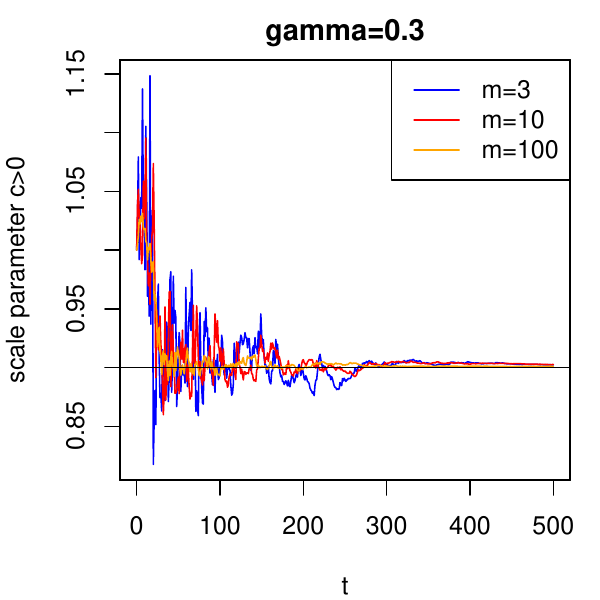}
\includegraphics[width=0.3\textwidth]{./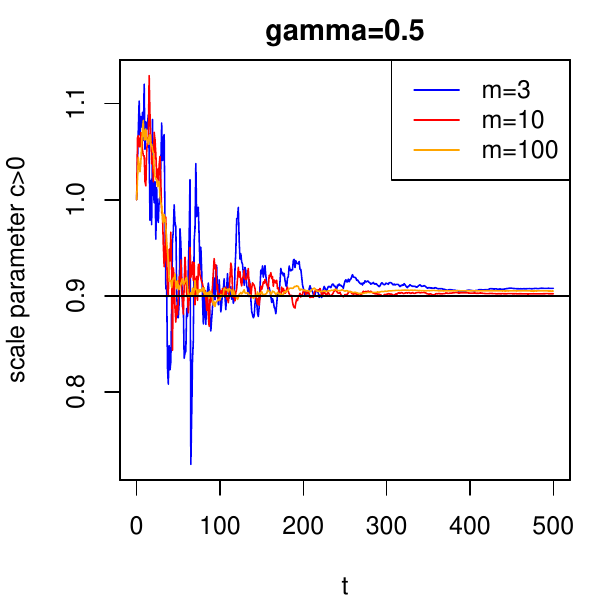}
\caption{Scale parameter $\hat{c}$ obtained by minimizing empirical GCE. It is preferable that $\hat{c}$ converges to $1-\xi=0.9$ (black line). 
}
\label{fig:scaling}
\end{figure}

\subsection{Comparison with GD using numerical integration}
\label{subsec:exp_comparison}

This section provides a comparison with the GD using numerical integration. For the experiments, we consider a $d$-variate normal distribution 
\begin{align}
p_{\theta}(x) = (2\pi)^{-d/2}\exp(-\|x-\theta\|_2^2/2)
\label{eq:normal_model}
\end{align}
with the true parameter $\theta_*=0.5 \mathbbm{1}_d$, where $\mathbbm{1}_d$ represents a $d$-dimensional vector whose entries are all $1$. Outlier distribution is also a normal distribution with mean $\theta_* + 100 \mathbbm{1}_d$ and the variance covariance matrix $0.01 I_d$. We synthetically generate $n=500$ observations with the contamination ratio $\xi=0.01$. 
Namely, this dataset contains $5$ outliers. 
Using the contaminated dataset, we estimate the parameter $\theta$ in the normal density model \eqref{eq:normal_model}. We consider the following two methods for optimization, which are initialized by the maximum likelihood estimator. The number of iterations is $T=300$, and the power-parameter is $\beta=0.5$. 

\begin{itemize}
\item \textbf{Stochastic approach}: we utilize the same setup as described in Section~\ref{subsec:estimation_intractable}. The learning rate $\eta_t$, which initially starts at $\eta_0=1$, is decreased by a factor of $0.7$ after every $20$ iteration.

\item \textbf{Baseline}: We calculate GD using numerical integration. Numerical integration is carried out on a regular grid comprising $M=3^d,10^d,50^d$ lattice points within the region $[-2,2]^d$. The constant learning rate $\omega$ is defined as $T^{-1}\sum_{t=1}^{T}\eta_t$.
\end{itemize}

We calculate the error $\|\hat{\theta}^{(t)}-\theta_*\|_2^2$ for each estimate $\hat{\theta}^{(t)}$ using each method. To ensure a fair comparison, we graph the error alongside the complexity $t(n+m)$ (for SGD) and $t(n+M)$ (for GD using numerical integration) for iterations $t=1,2,\ldots,T$. These complexities represent the cumulative number of observations and samples used to compute the (stochastic or numerically integrated) gradient.

The results are presented in Figure~\ref{fig:comparison_NI}.
As shown in Figure~\ref{fig:comparison_NI}(\subref{fig:d=2}), GD using numerical integration effectively minimizes the error. However, for higher dimensions, such as $d=3$ as seen in Figure~\ref{fig:comparison_NI}(\subref{fig:d=3}), GD becomes unstable, unlike the stochastic approach. This instability becomes more pronounced with increasing dimension $d$. For example, parameter estimation via numerical integration diverges when $d=4$. 
While we can potentially stabilize the computation by
(i) increasing the number of lattice points $M$ for numerical integration,
(ii) using a smaller learning rate, 
(iii) increasing the number of iterations $T$, 
all of these solutions lead to increased computational complexity. Therefore, we conclude that the stochastic approach is much more efficient than GD using numerical integration.

\begin{figure}[ht]
\begin{minipage}{0.48\textwidth}
\includegraphics[width=\textwidth]{./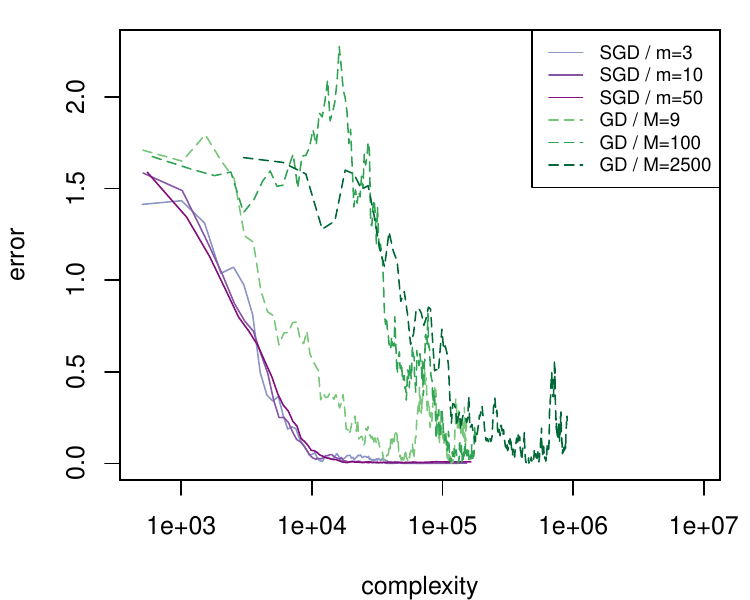}
\subcaption{$d=2$}
\label{fig:d=2}
\end{minipage}
\begin{minipage}{0.48\textwidth}
\includegraphics[width=\textwidth]{./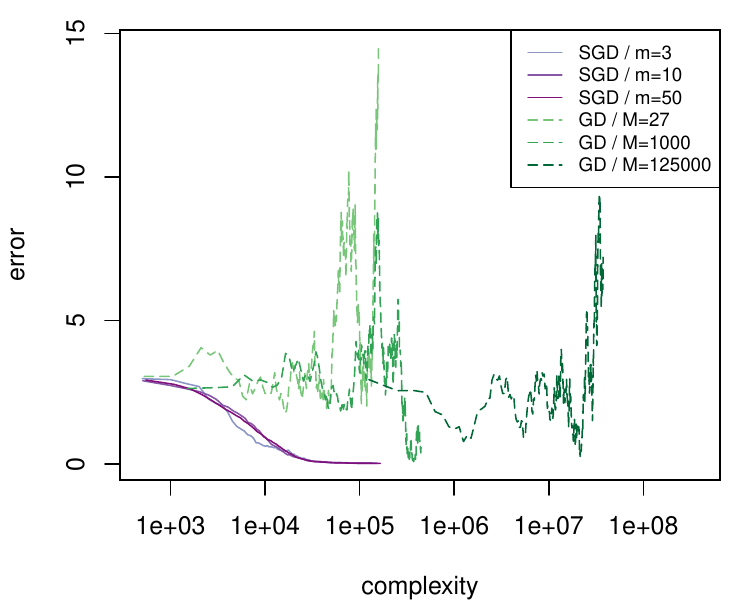}
\subcaption{$d=3$}
\label{fig:d=3}
\end{minipage}
\caption{Comparison with GD using numerical integration.}
\label{fig:comparison_NI}
\end{figure}

To rigorously validate the aforementioned observations, we further conduct similar experiments $10$ times. For each setting, we calculate the averaged MSE at the last iteration of both SGD and GD using numerical integration. The results of these experiments are detailed in Tables~\ref{table:d=2} and \ref{table:d=3}, and these results also indicate that SGD effectively minimizes DPD, compared with the GD using numerical integration.

\begin{table}[!ht]
\centering
\caption{Averaged MSE for $10$ times experiments (at the last iteration), for $d=2$. The standard deviation is shown in parentheses. The baseline is the GD using numerical integration (NI).}
\label{table:d=2}
\begin{tabular}{llcc}  
& & MSE & Complexity \\ 
\hline
\multirow{3}{*}{SGD} & $m=3$ & $0.0041 \, (0.0031)$ & $150900$ \\
& $m=10$ & $0.0023 \, (0.0014)$ & $153000$ \\
& $m=50$ & $0.0011 \, (0.0006)$ & $165000$ \\
\hline
\multirow{3}{*}{GD+NI} & $M=3^2$ & $0.1275 \, (0.1159)$ & $152700$ \\
& $M=10^2$ & $0.1462 \, (0.2379)$ & $180000$ \\
& $M=50^2$ & $0.1440 \, (0.1095)$ & $900000$ \\
\hline
\end{tabular}
\end{table}

\begin{table}[!ht]
\centering
\caption{Averaged MSE for $10$ times experiments  (at the last iteration), for $d=3$. The standard deviation is shown in parentheses. The baseline is the GD using numerical integration (NI).}
\label{table:d=3}
\begin{tabular}{llcc}  
& & MSE & Complexity \\ 
\hline
\multirow{3}{*}{SGD} & $m=3$ & $0.0099 \, (0.0059)$ & $150900$ \\
& $m=10$ & $0.0103 \, (0.0039)$ & $153000$ \\
& $m=50$ & $0.0089 \, (0.0016)$ & $165000$ \\
\hline
\multirow{3}{*}{GD+NI} & $M=3^3$ & $10.228 \, (16.716)$ & $158100$ \\
& $M=10^2$ & $4.1233 \, (4.0781)$ & $450000$ \\
& $M=50^2$ & $4.4707 \, (5.1095)$ & $3765000$ \\
\hline
\end{tabular}
\end{table}

\section{Conclusion}
This study provided a stochastic optimization approach to minimize the robust density power divergence for general parametric density models. 
This study explained its adequacy with the aid of conventional theories on stochastic gradient descent. 
Our stochastic approach was also used to minimize $\gamma$-divergence with the aid of unnormalized models. 
Numerical experiments were conducted to validate the proposed approach. 
We also provided \verb|R| package for the proposed approach in \url{https://github.com/oknakfm/sgdpd}.

\section*{Acknowledgement}
A. Okuno was supported by JSPS KAKENHI (21K17718, 22H05106). 
We would like to thank Atsushi Nitanda, Hironori Fujisawa, Shintaro Hashimoto, Takayuki Kawashima, Kazuharu Harada, Keisuke Yano, and Takahiro Kawashima for the helpful comments.

\appendix

\section{Proof of Proposition~\ref{prop:r_exponential}}
\label{app:proof_prop:r_exponential}

As we have 
\begin{align*}
p_{\theta}(x)^{1+\beta}
&=
h^{1+\beta} \exp\left( (1+\beta) \langle \theta,u(x)\rangle -(1+\beta) v(\theta) \right) \\
&=
h^{1+\beta} \exp \left(\langle (1+\beta)\theta,u(x)\rangle - 
v((1+\beta)\theta)
+
v((1+\beta)\theta)
-
(1+\beta) v(\theta)
\right) \\
&=
h^{\beta} \exp \left( v((1+\beta)\theta)
-
(1+\beta) v(\theta) \right)
\underbrace{
h \exp\left(
    \langle (1+\beta)\theta,u(x)\rangle - 
v((1+\beta)\theta)
\right)}_{=p_{(1+\beta)\theta}(x)},
\end{align*}
we obtain 
\[
    r_{\theta}^{(\beta)}
    =
    \frac{1}{1+\beta} \int_{\mathcal{X}}p_{\theta}(x)^{1+\beta} \diff x 
    =
    \frac{1}{1+\beta}
    h^{\beta} \exp \left( v((1+\beta)\theta) 
    -
    (1+\beta) v(\theta) \right).
\]

\section{Proof of Example~\ref{ex:gaussian_mixture}}
\label{app:proof_of_Example_ex:gaussian_mixture}

\subsection{Preparation}

To prove that normal mixture density described in Example~\ref{ex:gaussian_mixture} satisfies the conditions (i)--(iv) in Proposition~\ref{prop:convergence}, we first show the follwing Proposition~\ref{prop:lipschitz_pblp}. 

\begin{prop}
\label{prop:lipschitz_pblp}
Let $\beta>0$ and let $p_{\theta}(x)$ be normal mixture density defined in Example~\ref{ex:gaussian_mixture}. Then, the following hold. 
\begin{enumerate}[{(a)}]
\item $p_{\theta}(x)^{\beta} \frac{\partial}{\partial \theta} \log p_{\theta}(x)$ is Lipschitz, i.e., there exists $L>0$ such that
\begin{align}
\bigg\|
p_{\theta}(x)^{\beta} \frac{\partial}{\partial \theta}\log p_{\theta}(x)
-
p_{\theta'}(x)^{\beta} \frac{\partial}{\partial \theta}\log p_{\theta'}(x)
\bigg\|
\le 
L\|\theta-\theta'\|
\label{eq:lipschitz_pblp}
\end{align}
for all $\theta,\theta' \in \Theta$ and $x \in \mathcal{X}$. 

\item $p_{\theta}(x)$ is Lipschitz, i.e., there exists $L'>0$ such that $|p_{\theta}(x)-p_{\theta'}(x)| \le L'\|\theta-\theta'\|$ for all $\theta,\theta' \in \Theta$ and $x \in \mathcal{X}$.  
\item $\|p_{\theta}(x)^{\beta} \frac{\partial}{\partial \theta} \log p_{\theta}(x)\|$ is integrable, i.e., there exists $B>0$ such that $\int \| p_{\theta}(x)^{\beta} \frac{\partial}{\partial \theta} \log p_{\theta}(x) \| \diff x \le B$ for all $\theta \in \Theta$.

\revisebegin
\item $\{p_{\theta}(x)^{\beta} \frac{\partial}{\partial \theta_k} \log p_{\theta}(x)\}^8$ is integrable for $k=1,2,\ldots,s$, i.e., there exists $C>0$ such that $\int \{p_{\theta}(x)^{\beta} \frac{\partial}{\partial \theta_k} \log p_{\theta}(x) \}^8 \diff x \le C$ for all $\theta \in \Theta$. 
\reviseend
\end{enumerate}
\end{prop}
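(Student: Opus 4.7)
The plan is to prove each of (a)--(d) by exploiting two uniform structural facts about $p_\theta(x) = \rho(\theta_1)\phi_1 + (1-\rho(\theta_1))\phi_2$, where $\phi_j := \phi(x;\theta_{2j},\sigma_j^2)$ and $\sigma_j^2 := \theta_{2j+1}^2 + \varepsilon \geq \varepsilon$. First, the variance floor $\sigma_j^2 \geq \varepsilon$ gives Gaussian moment estimates: for any polynomial $Q$ of degree $d$ and any $r>0$, the map $(x,\mu,\sigma)\mapsto Q((x-\mu)/\sigma)\,\phi(x;\mu,\sigma^2)^r$ is uniformly bounded (and integrable in $x$, after multiplication by $\sigma$) over $\mu\in\mathbb{R},\sigma^2\geq\varepsilon$, with bounds depending only on $\varepsilon,r,d$. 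Second, $\rho$ and $\rho'(\theta_1)=\rho(\theta_1)(1-\rho(\theta_1))$ are bounded and $\rho'$ vanishes at the extremes where $\rho$ approaches $0$ or $1$; moreover each $\partial_{\theta_k}p_\theta$ for $k\in\{2,3\}$ is $\rho(\theta_1)\phi_1$ times a bounded polynomial in $(x-\theta_2)/\sigma_1$, and symmetrically for $k\in\{4,5\}$, while $\partial_{\theta_1}p_\theta = \rho'(\theta_1)(\phi_1-\phi_2)$.

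Part (b) then follows by computing the five partials explicitly and bounding each uniformly in $(x,\theta)$, so that the mean value theorem yields the Lipschitz constant $L'$. For parts (c) and (d), I would rewrite $p_\theta^\beta\,\partial_{\theta_k}\log p_\theta = p_\theta^{\beta-1}\,\partial_{\theta_k}p_\theta$ and split at $\beta=1$. When $\beta\geq 1$, $p_\theta^{\beta-1}\leq(2\pi\varepsilon)^{-(\beta-1)/2}$ is uniformly bounded and $\int|\partial_{\theta_k}p_\theta|\,dx$ is uniformly bounded by the Gaussian moment estimates. When $0<\beta<1$, I use the separate lower bounds $p_\theta\geq\rho\phi_1$ and $p_\theta\geq(1-\rho)\phi_2$, matching the Gaussian factor of $p_\theta^{\beta-1}$ to the Gaussian factor of the corresponding summand of $\partial_{\theta_k}p_\theta$, which reduces each piece of the integrand to $\phi_j^\beta$ times a bounded polynomial. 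Part (d) uses the same strategy after raising to the eighth power, which only affects the exponents in the Gaussian moment estimates.

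For part (a), the cleanest route is the observation $p_\theta^\beta\,\partial_\theta\log p_\theta = \beta^{-1}\nabla_\theta(p_\theta^\beta)$, so the required Lipschitz constant is controlled by the uniform operator-norm bound of the Hessian
\begin{equation*}
\nabla^2_\theta(p_\theta^\beta) = \beta(\beta-1)\,p_\theta^{\beta-2}\,(\nabla_\theta p_\theta)(\nabla_\theta p_\theta)^\top + \beta\,p_\theta^{\beta-1}\,\nabla^2_\theta p_\theta.
\end{equation*}
The main obstacle is the first term when $\beta<2$, since $p_\theta^{\beta-2}$ blows up wherever $p_\theta$ is small. The resolution is a component-matching argument: each entry of $(\nabla_\theta p_\theta)(\nabla_\theta p_\theta)^\top$ is a bilinear combination of $\rho\phi_1 R_1$, $(1-\rho)\phi_2 R_2$, and $\rho'(\theta_1)(\phi_1-\phi_2)$ with bounded polynomial coefficients $R_j$. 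Pairing each $\phi_1^2$ with $p_\theta\geq\rho\phi_1$ gives ratios $\phi_1^2/p_\theta^{2-\beta}\leq \rho^{\beta-2}\phi_1^\beta$ that remain bounded when absorbed with the outer $\rho$ or $\rho'$ factors (and symmetrically for $\phi_2^2$), while the mixed term uses the AM--GM bound $p_\theta\geq 2\sqrt{\rho(1-\rho)\phi_1\phi_2}$, so that $\phi_1\phi_2 / p_\theta^{2-\beta} \leq p_\theta^\beta / (4\rho(1-\rho))$; the extra factor $(\rho(1-\rho))^{-1}$ is absorbed by the $\rho'(\theta_1)^2 = (\rho(1-\rho))^2$ that accompanies any $\phi_1\phi_2$ term. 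The second Hessian term is handled identically, since $\nabla^2_\theta p_\theta$ has the same single-Gaussian-per-summand structure with slightly higher-degree polynomial factors. Combined, these give a uniform bound on $\|\nabla^2_\theta(p_\theta^\beta)\|$ over $\Theta\times\mathcal{X}$, and hence (a) via the mean value theorem.
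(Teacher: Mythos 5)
Your proposal is correct, and its core coincides with the paper's: for (a) you both reduce the Lipschitz claim to a uniform bound on the mixed second derivatives followed by the mean value theorem, and your identity $\beta^{-1}\nabla_\theta^2(p_\theta^\beta)=\beta(\beta-1)p_\theta^{\beta-2}(\nabla_\theta p_\theta)(\nabla_\theta p_\theta)^\top+\beta p_\theta^{\beta-1}\nabla_\theta^2 p_\theta$ is algebraically the same as the paper's expansion $\partial_{\theta_k}\{p_\theta^{\beta}\partial_{\theta_j}\log p_\theta\}=p_\theta^{\beta}[\beta\,\partial_{\theta_j}\log p_\theta\,\partial_{\theta_k}\log p_\theta+\partial^2_{\theta_j\partial\theta_k}\log p_\theta]$. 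Where you differ is the bookkeeping used to establish the uniform bound: the paper works with explicit log-derivative formulas and the responsibility ratios $\rho\phi_1/p_\theta\in[0,1]$, whereas you work with the gradient outer product and handle the small-$p_\theta$ region via the component-wise lower bounds $p_\theta\ge\rho\phi_1$, $p_\theta\ge(1-\rho)\phi_2$ and AM--GM for the cross terms. Your version is in fact more careful on the one delicate point: the paper's intermediate bounds such as $|\partial_{\theta_2}\log p_\theta|\le|x-\theta_2|/\varepsilon$ discard the responsibility factor, after which the displayed step ``$p_\theta^{\beta}\sum_l C_l x^l\le C_*$'' is not literally valid (take $x$ near $\theta_4$ far from $\theta_2$), and one must re-pair each polynomial in $(x-\theta_2)$ with a positive power of $\phi_1$ exactly as your matching argument does; just make sure that in the AM--GM step you retain the factor $\phi_1^{\beta/2}\phi_2^{\beta/2}$ (rather than passing immediately to $p_\theta^{\beta}$) so that the bounded-polynomial coefficients $R_j$ are absorbed by the Gaussian with the matching mean. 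You also supply actual arguments for (b)--(d), including the split at $\beta=1$ for integrability, where the paper only states that these ``are proved in the same way.''
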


\begin{proof}[Proposition~\ref{prop:lipschitz_pblp}]
We first prove (a) with the identity 
\begin{align}
&\frac{\partial}{\partial \theta_k}
\left\{
p_{\theta}(x)^{\beta} \frac{\partial}{\partial \theta_j} \log p_{\theta}(x)
\right\} \nonumber \\
&\hspace{5em}=
p_{\theta}(x)^{\beta}
\left[
\beta 
 \frac{\partial}{\partial \theta_j} \log p_{\theta}(x)
 \frac{\partial}{\partial \theta_k} \log p_{\theta}(x)
+
\frac{\partial^2}{\partial \theta_j \partial \theta_k} \log p_{\theta}(x)
\right].
\label{eq:devirative_derivative}
\end{align}

For simplicity, $\phi_1(x),\phi_2(x),\rho$ herein denote $\phi(x;\theta_2,\theta_3^2+\varepsilon),\phi(x;\theta_4,\theta_5^2+\varepsilon)$, and $\rho(\theta_1)$, respectively; then, straightforward calculation leads to
\begin{align*}
\frac{\partial}{\partial \theta_1}\log p_{\theta}(x)
&=
\rho(1-\rho)\{\phi_1(x)-\phi_2(x)\}, \\
\frac{\partial}{\partial \theta_2}\log p_{\theta}(x)
&=
\frac{\rho\phi_1(x)}{p_{\theta}(x)}
\frac{(x-\theta_2)}{\theta_3^2+\varepsilon}, \\
\frac{\partial}{\partial \theta_3}\log p_{\theta}(x)
&=
\frac{\rho\phi_1(x)}{p_{\theta}(x)}
\theta_3
\frac{(x-\theta_2)^2}{(\theta_3^2+\varepsilon)^2}, \\
\frac{\partial^2}{\partial \theta_1^2}\log p_{\theta}(x)
&=
\rho(1-\rho)(1-2\rho)\{\phi_1(x)-\phi_2(x)\}, \\
\frac{\partial^2}{\partial \theta_2^2}\log p_{\theta}(x)
&=
\revisebegin
-\frac{\rho\phi_1(x)}{p_{\theta}(x)}\frac{1}{\theta_3^2+\varepsilon}
+
\frac{\rho \phi_1(x)}{p_{\theta}(x)}
\left\{
1 - \frac{\rho \phi_1(x)}{p_{\theta}(x)} 
\right\}
\left( \frac{x-\theta_2}{\theta_3^2+\varepsilon} \right)^2, \reviseend 
\\
\frac{\partial^2}{\partial \theta_3^2}\log p_{\theta}(x)
&=
-4
\frac{\rho \phi_1(x)}{p_{\theta}(x)}
\theta_3^2 
\frac{(x-\theta_2)^2}{(\theta_3^2+\varepsilon)^3}
+
\frac{\rho \phi_1(x)}{p_{\theta}(x)}
\frac{(x-\theta_2)^2}{(\theta_3^2+\varepsilon)^2} \\
&\hspace{8em}+
\frac{\rho \phi_1(x)}{p_{\theta}(x)}
\left\{
1 - \frac{\rho \phi_1(x)}{p_{\theta}(x)} 
\right\}
\theta_3 
\frac{(x-\theta_2)^4}{(\theta_3^2+\varepsilon)^4}, \\
&\vdots
\end{align*}
Considering for all $x \in \mathbb{R}$ and $\theta \in \Theta$ that 
\begin{itemize} 
\item $\phi_1(x),\phi_2(x) \in [0,1/\sqrt{2\pi \varepsilon}]$,
\item $\rho \phi_1(x)/p_{\theta}(x) \in [0,1]$, 
\item $\rho \in [0,1]$, 
\item $\theta_3/(\theta_3^2+\varepsilon) \in [-1/2\sqrt{\varepsilon},1/2\sqrt{\varepsilon}]$, 
\item and $1/(\theta_3^2+\varepsilon) \in [0,1/\varepsilon]$, 
\end{itemize}
we have 
\begin{align*}
\bigg| \frac{\partial}{\partial \theta_1}
\log p_{\theta}(x) \bigg| 
&\le 
\sqrt{2/\pi \varepsilon} \\
\bigg| \frac{\partial}{\partial \theta_2}
\log p_{\theta}(x) \bigg| 
&\le 
\frac{|x-\theta_2|}{\varepsilon} \\ 
\bigg| \frac{\partial}{\partial \theta_3}
\log p_{\theta}(x) \bigg| 
&\le 
\frac{1}{2\sqrt{\varepsilon}\varepsilon}(x-\theta_2)^2, \\
\bigg| \frac{\partial^2}{\partial \theta_1^2}
\log p_{\theta}(x) \bigg| 
&\le 
\sqrt{2/\pi \varepsilon}, \\
\bigg| \frac{\partial^2}{\partial \theta_2^2}
\log p_{\theta}(x) \bigg| 
&\le 
\revisebegin 
\frac{1}{\varepsilon}
+
\frac{(x-\theta_2)^2}{\varepsilon^2}, \reviseend 
\\
\bigg| \frac{\partial^2}{\partial \theta_3^2}
\log p_{\theta}(x) \bigg| 
&\le 
\frac{2}{\varepsilon^2}(x-\theta_2)^2 
+
\frac{1}{2\sqrt{\varepsilon}\varepsilon^3}(x-\theta_2)^4, \\
&\vdots
\end{align*}
Although we skip for simplicity, by conducting the same redundant calculation for remaining $j,k$, 
the identity \eqref{eq:devirative_derivative} proves the existence of non-negative constants $\{C_l\}_{l=1}^{4}$ and $C_*$ that
\[
    \bigg|
    \frac{\partial}{\partial \theta_k}
    \left\{
        p_{\theta}(x)^{\beta}
        \frac{\partial}{\partial \theta_j}\log p_{\theta}(x)
    \right\}\bigg|
    \le 
    p_{\theta}(x)^{\beta}
    \sum_{l=0}^{4} C_l x^l
    \le C_*
\]
for all $x \in \mathbb{R}$ and $j,k \in \{1,2,\ldots,5\}$. 
Therefore, mean-value theorem finally proves Lipschitz property 
\begin{align*}
\bigg\|
&p_{\theta}(x)^{\beta} \frac{\partial}{\partial \theta}\log p_{\theta}(x)
-
p_{\theta'}(x)^{\beta} \frac{\partial}{\partial \theta}\log p_{\theta'}(x)
\bigg\| \\
&\le 
\left\{
\sum_{j=1}^{5}
\left|
    p_{\theta}(x)^{\beta} \frac{\partial}{\partial \theta_j}\log p_{\theta}(x)
-
p_{\theta'}(x)^{\beta} \frac{\partial}{\partial \theta_j}\log p_{\theta'}(x)
\right|^2
\right\}^{1/2} \\
&\le 
\left\{
\sum_{j=1}^{5}
\sum_{k=1}^{5}
\left[
\sup_{\theta \in \Theta}
\bigg|
\frac{\partial}{\partial \theta_k}
\left\{ p_{\theta}(x)^{\beta}\frac{\partial}{\partial \theta_j}\log p_{\theta}(x) \right\} \bigg|
|\theta_k-\theta_k'|
\right]^2
\right\}^{1/2} \\
&\le \left\{
5C_*^2 \|\theta-\theta'\|^2
\right\}^{1/2}
=:
L\|\theta-\theta'\|^2, 
\quad 
(L:=\sqrt{5}C_*).
\end{align*}

\revisebegin (b)--(d) are also proved in the same way. \reviseend 
\qed
\end{proof}

\subsection{Proof of Example~\ref{ex:gaussian_mixture}}

Leveraging Proposition~\ref{prop:lipschitz_pblp}, we herein prove that the normal mixture density in Example~\ref{ex:gaussian_mixture} satisfies the properties (i)--(iv) in Proposition~\ref{prop:convergence}. 
Particularly, \revisebegin we prove (ii) and (iv) because (i) and (iii) are immediately obtained \reviseend from the differentiability of $p_{\theta}(x)$ and our definition of the stochastic gradient.

\subsubsection*{Proof of (ii)}

We prove the Lipschitz property of the right-hand side of the inequality
\begin{align}
\bigg\|\frac{\partial f(\theta)}{\partial \theta} &- \frac{\partial f(\theta')}{\partial \theta}\bigg\| \nonumber \\
&\le 
\frac{1}{n}\sum_{i=1}^{n}
\bigg\|
    p_{\theta}(x)^{\beta}
    \frac{\partial}{\partial \theta}\log p_{\theta}(x)
    -
    p_{\theta'}(x)^{\beta}
    \frac{\partial}{\partial \theta}\log p_{\theta'}(x)
\bigg\| \nonumber \\
&\hspace{3em}+
\int 
\bigg\|p_{\theta}(x)^{1+\beta}
    \frac{\partial}{\partial \theta}\log p_{\theta}(x)
    -
    p_{\theta'}(x)^{1+\beta}
    \frac{\partial}{\partial \theta}\log p_{\theta'}(x)
\bigg\| \diff x.
\label{eq:f-difference}
\end{align}

The first term of the right-hand side is upper-bounded by $L\|\theta-\theta'\|$ for some $L>0$ as proved in Proposition~\ref{prop:lipschitz_pblp} (a). 
The integrant of the second term is evaluated as
\begin{align*}
\bigg\|p_{\theta}(x)^{1+\beta}&
    \frac{\partial}{\partial \theta}\log p_{\theta}(x)
    -
    p_{\theta'}(x)^{1+\beta}
    \frac{\partial}{\partial \theta}\log p_{\theta'}(x)
\bigg\| \\
&\le 
\bigg\|p_{\theta}(x)
\{p_{\theta}(x)^{\beta}\frac{\partial}{\partial \theta}\log p_{\theta}(x)
-
    p_{\theta'}(x)^{\beta}
    \frac{\partial}{\partial \theta}\log p_{\theta'}(x)\}
\bigg\| \\
&\hspace{10em}+
\bigg\|
\{ p_{\theta}(x) - p_{\theta'}(x) \}
    p_{\theta'}(x)^{\beta}\frac{\partial}{\partial \theta} \log p_{\theta'}(x)
\bigg\| \\
&\le 
p_{\theta}(x) \bigg\| p_{\theta}(x)^{\beta}\frac{\partial}{\partial \theta}\log p_{\theta}(x)
-
    p_{\theta'}(x)^{\beta}
    \frac{\partial}{\partial \theta}\log p_{\theta'}(x) \bigg\| \\
&\hspace{10em}+
|p_{\theta}(x)-p_{\theta'}(x)|
\bigg\| p_{\theta'}(x)^{\beta}\frac{\partial}{\partial \theta} \log p_{\theta'}(x) \bigg\| \\
&\le 
L \|\theta-\theta'\|p_{\theta}(x)
+
L'\|\theta-\theta'\| \bigg\| p_{\theta'}(x)^{\beta}\frac{\partial}{\partial \theta} \log p_{\theta'}(x) \bigg\| \quad (\because \text{Prop.} \ref{prop:lipschitz_pblp}).
\end{align*}
Therefore, the second term in the right-hand side of \eqref{eq:f-difference} is upper-bounded by $\tilde{L}\|\theta-\theta'\|$ with $\tilde{L}:=L+BL'$ by taking the integral with respect to $x$. Therefore, we proved the Lipschitz property of $\partial f(\theta)/\partial \theta$. 
Thus (ii) is proved.

\revisebegin

\subsubsection*{Proof of (iv)}

With 
\begin{align*}
\frac{\partial}{\partial \theta}
r_{\theta}^{\beta}
&=
\frac{\partial}{\partial \theta} 
\left( \frac{1}{1+\beta}\int_{\mathcal{X}} p_{\theta}(x)^{1+\beta} \diff x \right)
=
\int_{\mathcal{X}} p_{\theta}(x)^{1+\beta} \frac{\partial}{\partial \theta} \log p_{\theta}(x) \diff x, 
\end{align*}

we have 
\begin{align*}
\mathbb{E}_{\zeta^{(t)}}&
\left(
\bigg\| g(\theta^{(t)};\zeta^{(t)}) - \frac{\partial f(\theta^{(t)})}{\partial \theta} \bigg\|^2
\right) \\
&=
\mathbb{E}_{\zeta^{(t)}}
\left(
    \bigg\|
    \frac{1}{m}\sum_{j=1}^{m} w_t(y_j^{(t)}) p_{\theta^{(t)}}(y_j^{(t)})^{\beta} \frac{\partial \log p_{\theta^{(t)}}(y_j^{(t)})}{\partial \theta} 
    -
    \frac{\partial}{\partial \theta} r_{\theta^{(t)}}^{\beta}
    \bigg\|^2
\right) \\
&=
\mathbb{E}_{\zeta^{(t)}}
\left(
    \sum_{k=1}^{s}
    \bigg\{
    \frac{1}{m}\sum_{j=1}^{m} w_t(y_j^{(t)}) p_{\theta^{(t)}}(y_j^{(t)})^{\beta} \frac{\partial \log p_{\theta^{(t)}}(y_j^{(t)})}{\partial \theta_k} 
    -
    \frac{\partial}{\partial \theta_k} r_{\theta^{(t)}}^{\beta}
    \bigg\}^2
\right) \\
&=
\sum_{k=1}^{s}
\mathbb{E}_{\zeta^{(t)}}
\left(
    \bigg\{
    \frac{1}{m}\sum_{j=1}^{m} w_t(y_j^{(t)}) p_{\theta^{(t)}}(y_j^{(t)})^{\beta} \frac{\partial \log p_{\theta^{(t)}}(y_j^{(t)})}{\partial \theta_k} 
    -
    \frac{\partial}{\partial \theta_k} r_{\theta^{(t)}}^{\beta}
    \bigg\}^2 \right) \\
&=
\sum_{k=1}^{s}
\mathbb{V}_{\zeta^{(t)}}
\left(
    \frac{1}{m}\sum_{j=1}^{m} w_t(y_j^{(t)}) p_{\theta^{(t)}}(y_j^{(t)})^{\beta} \frac{\partial \log p_{\theta^{(t)}}(y_j^{(t)})}{\partial \theta_k} 
\right) \\
&=
\frac{1}{m}
\sum_{k=1}^{s}
\mathbb{V}_{\zeta^{(t)}}
\left(
    w_t(y_1^{(t)}) p_{\theta^{(t)}}(y_1^{(t)})^{\beta} \frac{\partial \log p_{\theta^{(t)}}(y_1^{(t)})}{\partial \theta_k} 
\right) \\
&\le 
\frac{1}{m}
\sum_{k=1}^{s}
\mathbb{E}_{\zeta^{(t)}}
\left(
\left\{
    w_t(y_1^{(t)}) p_{\theta^{(t)}}(y_1^{(t)})^{\beta} \frac{\partial \log p_{\theta^{(t)}}(y_1^{(t)})}{\partial \theta_k} 
\right\}^2
\right) \\
&\le 
\frac{1}{m}
\sum_{k=1}^{s}
\underbrace{
\mathbb{E}_{\zeta^{(t)}}\left( w_t(y_1^{(t)})^4 \right) ^{1/2}
}_{(\star 1)}
\underbrace{
\mathbb{E}_{\zeta^{(t)}}\left(
\left\{
p_{\theta^{(t)}}(y_1^{(t)})^{\beta} \frac{\partial \log p_{\theta^{(t)}} (y_1^{(t)})}{\partial \theta_k} 
\right\}^4
\right)^{1/2}
}_{(\star 2)},
\end{align*}
where the last inequality is known as Cauchy–Schwarz inequality. 
The boundedness of ($\star 1$) immediately follows from the assumption (C-3) in Example 1, and the boundedness of ($\star 2$) is obtained by again applying Cauchy-Schwarz inequality:
\begin{align*}
    (\star 2)
    &=
    \left(
        \int  
        \left\{ p_{\theta^{(t)}}(y)^{\beta} \frac{\log p_{\theta^{(t)}}(y)}{\partial \theta_k} \right\}^4 
        \tilde{p}_t(y)
        \diff y
    \right)^{1/2} \\
    &\le 
    \left(
        \int  
        \left\{ p_{\theta^{(t)}}(y)^{\beta} \frac{\log p_{\theta^{(t)}}(y)}{\partial \theta_k} \right\}^8
        \diff y
    \right)^{1/4}
    \left(
        \int  
        \tilde{p}_t(y)^2
        \diff y
    \right)^{1/4}.
\end{align*}
Boundedness of the first term is proved by Proposition~\ref{prop:lipschitz_pblp} (d). 
The second term is also bounded by the squared-integrable assumption on $\tilde{p}_t$. 
Thus (iv) is proved. 

\reviseend

\qed

\section{Distributions} 
\label{app:distribution}

\begin{enumerate}[{(1)}]
\item \textbf{Inverse normal distribution:} 
the probability density function is
\[
    p^{\ig}_{\theta}(x) 
    =
    \sqrt{\frac{\lambda}{2\pi x^3}}
    \exp\left(
        -\frac{\lambda (x-\mu)^2}{2\mu^2 x}
    \right), 
    \quad 
    (x>0)
\]
where $\mu>0$ is the mean parameter and $\lambda>0$ is the shape parameter. 
As we have 
\[
    t_{\theta}^{\ig}(x)
    =
    \left(
        \frac{\lambda (x-\mu)}{\mu^3} \, , \,
        \frac{1}{2\lambda} - \frac{(x-\mu)^2}{2\mu^2 x}
    \right),
    \quad 
    \theta=(\mu,\lambda),
\]
the maximum likelihood estimator for $\theta$ is 
\[
    \hat{\mu}=\frac{1}{n}\sum_{i=1}^{n} x_i,
    \quad 
    \hat{\lambda}=\frac{1}{n^{-1} \sum_{i=1}^{n} \{x_i^{-1} - \hat{\mu}^{-1} \}}.
\]
We leverage \verb|rinvgauss| function in \verb|actuar| package in \verb|R| language to generate random nubmers following the inverse normal distribution.

\item \textbf{Gompertz distribution:} 
the probability density function is
\[
    p^{\gompertz}_{\theta}(x)
    =
    \lambda \exp\left(\omega x + \frac{\lambda}{\omega} \{1-\exp(\omega x)\}\right), \quad (x \ge 0)
\]
where $\omega>0$ is the scale parameter and $\lambda>0$ is the shape parameter. 
As we have 
\[
    t_{\theta}^{\gompertz}(x)
    =
    \left(
        x - 
        \lambda \left( \frac{1-\exp(\omega x)}{\omega^2} + \frac{x \exp(\omega x)}{\omega} \right) 
        \, , \, 
        \frac{1}{\lambda} + \frac{1-\exp(\omega x)}{\omega}
    \right),
    \quad 
    \theta=(\omega,\lambda),
\]
the maximum likelihood estimator satisfies 
\begin{align*}
    \hat{\lambda} &= 
    -\frac{\hat{\omega}}{n^{-1}\sum_{j=1}^{n}\{1-\exp(\hat{\omega} x_j)\}}, \\
\sum_{i=1}^{n}x_i
+&
\frac{1}{n^{-1}\sum_{j=1}^{n}\{1-\exp(\hat{\omega}x_j)\}}
\sum_{i=1}^{n}
\left\{ \frac{1-\exp(\hat{\omega} x_i)}{\hat{\omega}} 
+ 
x_i \exp(\hat{\omega} x_i) \right\} 
=
0.
\end{align*}
We can numerically find $\hat{\omega}$ by the Newton-Raphson algorithm, whereby we obtain $\hat{\lambda}$. 
We leverage \verb|rgompertz| function in \verb|VGAM| package in \verb|R| language to generate random numbers following the gompertz distribution.

\item \textbf{Normal mixture distribution:} 
the probability density function is
\[
    p_{\theta}^{\gm}(x)
    =
    \alpha \phi(x;\mu_1,\sigma_1^2)
    +
    (1-\alpha) \phi(x;\mu_2,\sigma_2^2),
    \quad 
    \phi(x;\mu,\sigma)=\frac{1}{\sqrt{2\pi\sigma^2}} \exp\left(-\frac{(x-\mu)^2}{2\sigma^2}\right),
\]
where $\alpha \in [0,1]$ denotes the mixing coefficient. 
We have 
\begin{align*}    
t_{\theta}^{\gm}(x)
    &=
    \left(
        c_1 \frac{x-\mu_1}{\sigma_1^2},
        c_1 \left\{
            \frac{(x-\mu_1)^2}{\sigma_1^3} - \frac{1}{\sigma_1}
        \right\},
        c_2 \frac{x-\mu_2}{\sigma_2^2},
        c_2 \left\{
            \frac{(x-\mu_2)^2}{\sigma_2^3} - \frac{1}{\sigma_2}
        \right\},
        c_3
    \right), \\
    \theta&=(\mu_1,\sigma_1,\mu_2,\sigma_2,\alpha),
\end{align*}
where
\[
    c_1=\frac{\alpha \phi(x;\mu_1,\sigma_1^2)}{p_{\theta}^{\gm}(x)}, \quad
    c_2=\frac{(1-\alpha) \phi(x;\mu_2,\sigma_2^2)}{p_{\theta}^{\gm}(x)}, \quad 
    c_3=\frac{\phi(x;\mu_1,\sigma_1^2)-\phi(x;\mu_2,\sigma_2^2)}{p_{\theta}^{\gm}(x)}.
\]
We computed the maximum likelihood estimator by leveraging the \verb|GMM| fucntion in \verb|ClusterR| package in \verb|R| language. 
\end{enumerate}

\bibliographystyle{apalike}
\bibliography{cgd}

\end{document}